%
\documentclass[runningheads]{llncs}
\usepackage[T1]{fontenc}
%
\usepackage{graphicx}
%

\usepackage{amssymb}
\usepackage{amsmath}
\usepackage{mathtools}
\usepackage{xcolor}
\usepackage{txfonts}
\usepackage{enumerate}
\usepackage{bbm}
\usepackage{url}
\usepackage{xspace}
\usepackage[normalem]{ulem}
\usepackage{tikz}
\usetikzlibrary{automata,positioning}
\usepackage{multirow}
\usepackage{array}
\usepackage{booktabs}
\usepackage{wrapfig}

\usepackage{hyperref} 

\usepackage{breakcites}


\newcommand{\dist}{\mathcal{D}}
\newcommand{\tup}[1]{\left(#1\right)}

\newcommand{\set}[1]{\left\lbrace #1\right\rbrace}

\newcommand{\RR}{\mathbb{R}}

\renewcommand{\SS}{{\mathcal{X}}}
\newcommand{\IS}{\mathcal{U}}

\newcommand{\Sys}{\mathcal{S}}
\newcommand{\cont}{\pi}

\renewcommand{\path}{\rho}

\newcommand{\AP}{P}

\newcommand{\DS}{\mathcal{W}}
\newcommand{\Dyn}{f}
\newcommand{\Dynx}{f^\times}
\newcommand{\Sysx}{\Sys^\times}
\newcommand{\SSx}{\mathcal{X}^\times}
\newcommand{\ISx}{\mathcal{U}^\times}

\newcommand{\Sysrw}{\Sys^{\mathrm{rw}}}

\newcommand{\Aut}{\mathcal{A}}
\newcommand{\Q}{Q}
\newcommand{\QN}{\Q_n}
\newcommand{\QD}{\Q_d}
\newcommand{\qinit}{q_{\mathsf{init}}}
\newcommand{\acc}{\mathcal{F}}
\newcommand{\tran}{\Delta}

\newcommand{\Qreject}{\Q_{\text{reject}}}


\newcommand{\Spec}{\varphi}



\newcommand{\lsm}{LDBSM\xspace}
\newcommand{\lsms}{LDBSMs\xspace}
\newcommand{\Cert}{V}

\newcommand{\safe}{\mathrm{safe}}
\newcommand{\CertS}{V^{\safe}}
\newcommand{\live}{\mathrm{live}}
\newcommand{\CertB}{V^{\live}}
\newcommand{\safetycond}{\mathit{SafetyCond}(x,q,a,q')}

\newcommand{\Init}{\text{Init}}
\newcommand{\Initx}{\text{Init}^\times}

\newcommand{\var}[1]{\underline{#1}}


\newcommand{\F}{\textsc{F}}
\newcommand{\G}{\textsc{G}}
\newcommand{\U}{\textsc{U}}

\newcommand{\SI}{\textit{SI}}

\newcommand{\Buchi}{\textit{B\"uchi}}

\usepackage{comment}
\usepackage{paralist}

\begin{document}
%
\title{Supermartingale Certificates for \\ Quantitative Omega-regular Verification and Control}
%
%
\author{Thomas A.\ Henzinger\inst{1} \and
Kaushik Mallik\inst{2} \and\\
Pouya Sadeghi\inst{3} \and
\DJ or\dj e \v{Z}ikeli\'c\inst{3}}
%
%
\institute{Institute of Science and Technology Austria (ISTA), Austria\\ \email{tah@ist.ac.at} \and 
IMDEA Software Institute, Spain\\ \email{kaushik.mallik@imdea.org} \and
Singapore Management University, Singapore\\ \email{pouyas@smu.edu.sg}, \email{dzikelic@smu.edu.sg} }
%
\maketitle              
\begin{abstract}
We present the first supermartingale certificate for quantitative $\omega$-regular properties of discrete-time infinite-state stochastic systems. Our certificate is defined on the product of the stochastic system and a limit-deterministic B\"uchi automaton that specifies the property of interest; hence we call it a limit-deterministic B\"uchi supermartingale (LDBSM). 
Previously known supermartingale certificates applied only to quantitative reachability, safety, or reach-avoid properties, and to qualitative (i.e., probability~1) $\omega$-regular properties.
We also present fully automated algorithms for the template-based synthesis of LDBSMs, for the case when the stochastic system dynamics and the controller can be represented in terms of polynomial inequalities. Our experiments demonstrate the ability of our method to solve verification and control tasks for stochastic systems that were beyond the reach of previous supermartingale-based approaches.

\keywords{Supermartingales  \and Probabilistic verification \and Stochastic control}
\end{abstract}

\section{Introduction}\label{sec:intro}

Stochastic (or probabilistic) systems provide a framework for modeling and quantifying uncertainties in computational models. They are ubiquitous in many areas of computer science, including randomized algorithms~\cite{segala2000verification}, security and privacy protocols~\cite{BartheGGHS16}, stochastic networks~\cite{FosterKMR016}, control theory~\cite{aastrom2012introduction}, and artificial intelligence~\cite{Ghahramani15}. Many of these systems are safety critical in nature; hence formal methods for guaranteeing their correctness have become an important topic of research.

While the correctness of stochastic systems can be accomplished using a variety of complementary approaches, such as abstraction-based~\cite{esmaeil2013adaptive,zamani2014symbolic,dutreix2022abstraction,majumdar2024symbolic}, symbolic integration-based~\cite{GehrMV16,SankaranarayananCG13,BeutnerOZ22}, or logical calculi-based methods~\cite{MM05,KKMO16:wp-expected-runtime,OKKM16:recursive-prob-wp-calculus}, there is a growing interest in martingale-based approaches, which do not only guarantee correctness but also provide \emph{supermartingale certificates} as locally checkable correctness proofs for improving the trustworthiness of stochastic system design. A {\em supermartingale certificate} is a mathematical witness that the desired specification is satisfied. Their name is due to their usage of supermartingale processes from probability theory~\cite{Williams91} towards proving properties of stochastic systems.

Recent years have seen significant theoretical and algorithmic advances in utilizing supermartingale certificates for reasoning about stochastic systems with a wide range of specifications, such as termination and reachability~\cite{SriramCAV,CFNH16:prob-termination,CFG16,mciver2017new,abate2021learning,chatterjee2023lexicographic,majumdar2025sound}, safety~\cite{CS14:expectation-invariants,ChatterjeeNZ17,takisaka2021ranking,WangS0CG21,chatterjee2022sound,AbateEGPR23,WangYFLO24}, B\"uchi and co-B\"uchi (aka, stability)~\cite{ChakarovVS16}, and cost analysis~\cite{ngo2018bounded,WFGCQS19,chatterjee2024quantitative}. These certificates have also been used for the synthesis and verification of controllers with respect to reachability~\cite{florchinger1995lyapunov,lechner2022stability}, safety~\cite{prajna2004stochastic,steinhardt2012finite,mathiesen2022safety,ZhiWLOZ24}, reach-avoid~\cite{vzikelic2023learning,xue2023reach}, and stability~\cite{ansaripour2023learning} objectives, and extensions to continuous-time stochastic systems were also considered~\cite{NeustroevGL25}. However, existing supermartingale certificates are limited to atomic specifications or to small fragments of logical specification languages. Designing a new, bespoke supermartingale certificate for each complex automaton-based or composite logical specification is not a feasible approach. Ideally, we would like to have a systematic method for designing supermartingale certificates for an entire, rich {\em class of specifications}, which captures all different properties that one might care about. The recent work of Abate et al.~\cite{abate2024stochastic} has considered this problem and proposed a supermartingale certificate for the general class of $\omega$-regular specifications. However, their certificate is restricted to properties that hold with probability~1, so-called {\em qualitative} specifications. To the best of our knowledge, no supermartingale certificates for {\em quantitative $\omega$-regular properties} have been proposed. That is, existing supermartingale certificates are insufficient for reasoning about whether an $\omega$-regular specification is satisfied with a probability $p \in [0,1]$, which would allow the fine-grained probabilistic analysis that is often needed in practice.

In this work, we present the first supermartingale certificate for reasoning about the full class of {\em quantitative $\omega$-regular specifications} for discrete-time infinite-state stochastic systems. In order to design our certificate, we use the fact that each $\omega$-regular language can be represented by a limit-deterministic B\"uchi automaton (LDBA)~\cite{CourcoubetisY95}. This allows us to reduce the problem of proving that an $\omega$-regular specification is satisfied with probability at least $p \in [0,1]$ to the problem of proving that some set of states in the product of the stochastic system and the LDBA is reached infinitely many times with probability at least~$p$. Hence, we call our certificiate a {\em limit-deterministic B\"uchi supermartingale (\lsm)}. The design of \lsms builds upon and significantly generalizes several existing supermartingale certificates for both quantitative and qualitative reasoning about stochastic systems.

We supplement our theoretical developments with {\em fully automated algorithms for the verification and control} of infinite-state stochastic systems using \lsms. Our algorithms apply to the setting in which the system dynamics and the controller can be represented in terms of a system of polynomial inequalities over real-valued variables. We show how the verification problem (i.e., the synthesis of a supermartingale certificate) and the control problem (i.e., the synthesis of a controller together with a supermartingale certificate) can be reduced to solving a system of polynomial inequalities, for which an off-the-shelf SMT solver may be used. Not only does this give rise to the first \emph{certificate-based} algorithm, but also the \emph{first} algorithm for the verification and control of general \emph{quantitative $\omega$-regular specifications} for polynomial dynamical systems with a general form of stochastic uncertainties. We implemented a prototype of our method and experimentally evaluated it on a number of random walk examples with quantitative $\omega$-regular specifications. Our results show that our method can reason about a range of quantitative $\omega$-regular properties, including those that were beyond the reach of prior works. 

Finally, while our focus is on the verification and control of infinite-state stochastic systems, our \lsm certificates are applicable to general Borel-measurable stochastic dynamic systems, and hence they naturally extend to probabilistic programs with real- and integer-valued variables. Moreover, our automated synthesis algorithm also extends to polynomial probabilistic programs, similarly to existing works on supermartingale-based verification of probabilistic programs~\cite{CFG16,ChatterjeeNZ17,chatterjee2022sound}.

In short, our contributions are as follows:
\begin{compactenum}
	\item {\bf Theory: Supermartingale certificates.} We present limit-deterministic B\"uchi supermartingales (\lsms), the first supermartingale certificate for arbitrary quantitative $\omega$-regular specifications of discrete-time infinite-state stochastic systems.
	\item {\bf Automation: Template-based certificate synthesis.} We present the first fully automated algorithms for the verification and control of polynomial stochastic dynamical systems with respect to arbitrary quantitaitve $\omega$-regular specifications.
\end{compactenum}


\section{Preliminaries}\label{sec:prelims}

We assume that the reader is familiar with basic notions of probability theory such as probability space, expected value, etc, which can also be found in standard textbooks on probability theory~\cite{Williams91}. For a set $S$, we use $\dist(S)$ to denote the set of all probability distributions over $S$, and use $2^S$ to denote the set of all subsets of $S$. 

\subsection{Stochastic Dynamical Systems and the Problem Statement}\label{sec:problem}

\noindent{\bf Stochastic dynamical systems.} A discrete-time \emph{stochastic dynamical system} (SDS) is a tuple $\Sys = \tup{\SS,\IS,\DS,d,\Dyn,\Init}$, where $\SS \subseteq \RR^n$ is the state space, $\IS \subseteq \RR^m$ is the control input space, $\DS \subseteq \RR^p$ is the stochastic disturbance space, $d \in \dist(\DS)$ is the stochastic disturbance distribution, $\Dyn\colon \SS\times\IS\times\DS\to \SS$ is the {\em system dynamics} function, and $\Init \subseteq \SS$ is the set of initial system states. 
We refer to the elements of $\SS$ as {\em states}, the elements of $\IS$ as {\em control inputs}, and the elements of $\DS$ as {\em stochastic disturbances}. 
A {\em path} in $\Sys$ is an infinite sequence of states $x_0,x_1,\dots$ in $\SS$ where $x_0 \in \Init$ and for every $t\geq 0$, there exist a control input $u_t\in \IS$ and a stochastic disturbance $w_t\in \DS$ such that $x_{t+1} = f(x_t,u_t,w_t)$.

A \emph{controller} in $\Sys$ is a function $\cont: \SS\to \IS$, mapping every state to a control input. To avoid clutter, we define controllers as being memoryless, i.e., only depending on the current state and without using any information about the past. Our setting organically extends to {\em finite memory} controllers, which are able to remember bounded amount of information from the past. We will use one such finite memory controller in Section~\ref{sec:automata}, which will use its memory to track the current state of the specification automaton.

Our model of SDS subsumes the cases of SDS without control inputs (using $|\IS|=1$, making the control input ineffective) and without disturbances (using $|\DS|=1$, making the disturbance ineffective and giving rise to deterministic dynamical systems).

\smallskip\noindent{\bf Semantics of SDS.} We assume that the sets $\SS, \Init \subseteq \RR^n$, $\IS \subseteq \RR^m$ and $\DS \subseteq \RR^p$ as well as the functions $\Dyn\colon \SS\times\IS\times\DS\to \SS$ and $\pi:\SS \rightarrow \IS$ are all Borel-measurable. These assumptions are necessary for the semantics of SDS to be mathematically well defined.

Under these assumptions, for every initial state $x_0 \in \Init$, the SDS $\Sys$ and the controller $\cont$ together define a continuous-state, discrete-time Markov decision process that takes values in the set of states $\SS$ and whose trajectories correspond to paths in $\Sys$~\cite{Puterman94}. Initially, the process starts in $x_0$. Then, at every time step $t \in \mathbb{N}_0$, the next state of the process is defined by the equation $x_{t+1} = f(x_t,\pi(x)_t, w_t)$, where $w_t \sim d$ is a stochastic disturbance sampled from the stochastic disturbance distribution $d$, independently from the previous samples.

This process gives rise to a probability space over the set of all paths in $\Sys$~\cite{Puterman94}. We use $\mathbb{P}^{\Sys,\pi}_{x_0}$ and $\mathbb{E}^{\Sys,\pi}_{x_0}$ to denote the probability measure and the expectation operators in this probability space, respectively.

\smallskip\noindent{\bf Specifications.} A {\em specification} $\Spec$ in an SDS $\Sys$ is a set of paths in $\Sys$.
 We write $\mathbb{P}^{\Sys,\pi}_{x_0}[\Spec]$ to denote the probability that a path in $\Sys$ randomly sampled from the underlying probability space (over all paths)  satisfies the specification $\Spec$. We will consider $\omega$-regular specifications, which constitute a broad class of properties subsuming linear temporal logic (LTL) and computation tree logics~\cite{baier2008principles}. 
Every $\omega$-regular specification can be represented using limit deterministic B\"uchi automata, which will be defined in Section~\ref{sec:automata}.
We will occasionally use the standard LTL notation~\cite{baier2008principles} for convenience; in particular, $\G\, a$ will stand for ``always $a$,'' $\F\, a$ will stand for ``eventually $a$,'' and $a\U\, b$ will stand for ``$a$ until $b$.''

\smallskip\noindent{\bf Problem statement.} We now formally define the formal verification and control problems that we consider in this work. Consider an SDS $\Sys$, an $\omega$-regular specification $\varphi$ in $\Sys$, and a probability threshold $p \in [0,1]$:
\begin{compactenum}
	\item {\em Verification problem.} Given a controller $\pi$, prove that $\mathbb{P}^{\Sys,\pi}_{x_0}[\varphi] \geq p$ for all $x_0 \in \Init$.
	\item {\em Control problem.} Synthesize a controller $\pi$ such that $\mathbb{P}^{\Sys,\pi}_{x_0}[\varphi] \geq p$ for all $x_0 \in \Init$.
\end{compactenum}

\begin{example}[Running example: 1D random walk]\label{ex:random walk:1}
	For our running example, we will use a simple 1D random walk denoted as $\Sysrw$, whose state space is the real line $\mathbb{R}$, input space is a singleton set $\{\bot\}$ with a dummy input $\bot$, i.e., the system is uncontrolled, stochastic disturbance space is $ \DS=[-2,1]$ with a continuous uniform distribution over $\DS$, system dynamics is given as
	\begin{align}
		f(x,\bot,w) = 
			\begin{cases}
				x	&	\text{if } x>100,\\
				x + w	&	\text{otherwise,}
			\end{cases}
	\end{align}		
	 and the set of initial states is $[2,3]$.
	 Although this simple SDS does not contain control inputs in its dynamics, it will be sufficient for us to illustrate our technical contributions; a variation of $\Sysrw$ with control inputs will be considered in Section~\ref{sec:experiments}.
	Essentially, at every step, the random walk $\Sysrw$ has a higher probability of moving towards the left than towards the right. 
	If however, it ever crosses $100$, it becomes stationary.
	Consider the specification that is the set of all paths that visit the negative half of the real line infinitely many times, formalized as $\{ x_0x_1\ldots \mid \forall i\geq 0\;.\;\exists j>i\;.\; x_j\leq 0\}$. 
	Using the standard notation of linear temporal logic, we will write this specification as $\G\F\, (x\leq 0)$.
	For this example, we will refer to the set $\{x\mid x\leq 0\}$ as the \emph{target}.
\end{example}

\subsection{Limit-deterministic B\"uchi Automata}\label{sec:automata}

To specify properties of a given SDS, we label its state space with a finite set of {\em atomic propositions} $P$. 
A {\em labeling function} $L: \SS \rightarrow 2^P$ maps each state $x \in \SS$ to the set of atomic propositions that are true in $x$, and we assume that each atomic proposition $p\in P$ has an associated Borel-measurable arithmetic expression $\textrm{exp}_p: \SS \rightarrow \RR$  such that for every $x\in \SS$, $p\in L(x)$ iff $ \textrm{exp}_p(x)\geq 0$. Hence, a path $(x_i)_{i=0}^\infty$ in the SDS gives rise to an {\em infinite word} $(L(x_i))_{i=0}^\infty$ over the alphabet $2^P$.
 
We use the classical result that for every $\omega$-regular specification $\varphi$ defined over a finite set of atomic predicates $P$, there exists a limit-deterministic B\"uchi automaton with alphabet $\Sigma = 2^P$ which accepts the same set of infinite words over $2^P$ as $\varphi$~\cite{CourcoubetisY95}.

\smallskip\noindent{\bf Nondeterministic B\"uchi automata.} A {\em nondeterministic B\"uchi automaton (NBA)} is a tuple $\Aut = \tup{\Q,\qinit,\Sigma,\tran,\acc}$  where $\Q$ is a finite set of states, $\qinit\in\Q$ is an initial state, $\Sigma$ is a finite alphabet that includes the empty word symbol $\epsilon$,
$\tran\colon \Q\times \Sigma \to 2^\Q$ is a nondeterministic transition function, and $\acc\subseteq 2^\Q$ is a set of accepting states. An infinite word $\sigma_0,\sigma_1,\dots$ of letters in the alphabet $\Sigma$ is {\em accepted} by $\Aut$, if it gives rise to at least one B\"uchi accepting run in $\Aut$, i.e.,~if there exists a run $q_0,q_1,\dots$ of states in $\Q$ such that $q_0 = \qinit$, for each $i \in \mathbb{N}_0$, $q_{i+1} \in \Delta(q_i,\sigma_i)$, and for every $i\in \mathbb{N}_0$ there exists a $j>i$ such that $q_j\in \acc$ (infinitely many visits to $\acc$).

\smallskip\noindent{\bf Limit-deterministic B\"uchi automata.} {\em Limit-deterministic B\"uchi automata (LDBA)} are particular types of NBAs with a restricted form of nondeterminism. In particular, an NBA $\Aut = \tup{\Q,\qinit,\Sigma,\tran,\acc}$ is said to be an LDBA, if there exists a partition $\set{\QN,\QD}$ of the set of states $\Q$ (i.e., $\Q=\QN\cup \QD$ and $\QN\cap \QD=\emptyset$) such that for every $q\in\QD$ and for every $\sigma \in \Sigma$, (i)~the transitions from $q$ are deterministic, i.e., $|\tran(q,\sigma)|=1$, and (ii)~there is no transition from $q$ going outside of $\QD$, i.e., $\tran(q,\sigma)\subseteq \QD$. We call $\QN$ the {\em nondeterministic part} of the LDBA and $\QD$ the {\em deterministic part} of the LDBA.

As mentioned above, for every $\omega$-regular specification $\varphi$ defined over a finite set of atomic predicates $P$, there exists a canonical LDBA $\Aut$ with the alphabet $\Sigma = 2^P$ accepting the same set of infinite words as $\varphi$. Hence, a path $(x_i)_{i=0}^\infty$ in $\Sys$ satisfies $\omega$-regular specification $\varphi$ if and only if the infinite word  $(L(x_i))_{i=0}^\infty$ is accepted by the LDBA $\Aut$. Our verification and control problems thus reduce to ensuring that a random run in $\Sys$ induces an infinite word over $2^P$ accepted by the LDBA $\Aut$ with probability at least~$p$.

\smallskip\noindent{\bf Product SDS.} In order to reason about paths and infinite words that they induce in the LDBA, we consider a (synchronous) product of the SDS and the LDBA. A {\em product SDS} of the SDS $\Sys = \tup{\SS,\IS,\DS,d,\Dyn,\Init}$ and the LDBA $\Aut = \tup{\Q,\qinit,2^\AP,\tran,\acc}$ is a tuple $\Sysx = (\SSx,\ISx,\DS,d,\Dynx,\Initx)$, where
\begin{compactitem}
	\item $\SSx = \SS \times \Q$ is the state space of the product SDS,
	\item $\ISx = \IS \times \Q$ is the control input space of the product SDS,
	\item $\DS \subseteq \RR^p$ is the stochastic disturbance space of the product SDS,
	\item $d \in \dist(\DS)$ is the stochastic disturbance distribution of the product SDS,
	\item $\Dynx\colon \SSx\times\ISx\times\DS\to \SSx$ is the system dynamics function, and
	\item $\Initx = \Init \times \{\qinit\}$ is the set of initial states of the product SDS.
\end{compactitem}
Similarly as before, a {\em controller} in $\Sysx$ is a function $\pi^\times: \SSx \rightarrow \ISx$ mapping every state to a control input. However, we require that for every $(x,q) \in \SSx$, the corresponding control input $\pi^\times(x,q) = (u,q') \in \IS \times \Q$ satisfies $q' \in \Delta(q,L(x))$. Hence, the product SDS indeed models a synchronous product of the SDS and the LDBA, where the controller $\pi^\times$ resolves nondeterminism both in the SDS and in the LDBA. To distinghuish between the two controller components, we write $\pi^\times(x,q) = (\pi^\Sys(x,q),\pi^\Aut(x,q)))$. The semantics of the product SDS are defined analogously to Section~\ref{sec:problem}, with
\[ (x_{t+1},q_{t+1}) = \Big( f(x_t,\pi^\Sys(x_t,q_t),w_t),\pi^\Aut(x_t,q_t) \Big) \]
for every time step $t\in\mathbb{N}_0$ and $w_t$ sampled according to $d$ and independent from $w_i$ for every $i \leq t-1$. For each initial state $(x_0,\qinit) \in \Initx$, we denote by $\mathbb{P}^{\Sysx,\pi^\times}_{(x_0,\qinit)}$ and $\mathbb{E}^{\Sysx,\pi^\times}_{(x_0,\qinit)}$ the probability measure and the expectation operator in the probability space of all paths in $\Sysx$ with controller $\pi^\times$, respectively.

Finally, the following proposition will allow us to reduce our verification and control problems to the problems of synthesizing a controller for the product SDS. Denote by $\Buchi(B)$ the set of all runs in $\Sysx$ that visit states in $B \subseteq \SSx$ infinitely many times. 

\begin{proposition}\label{thm:product}
	Consider an SDS $\Sys$ and a controller $\pi$ in $\Sys$. Let $\varphi$ be an $\omega$-regular specification over a finite set of atomic propositions $P$, $\Aut = \tup{\Q,\qinit,2^P,\tran,\acc}$ be an LDBA for the specification $\varphi$, and $\Sysx$ be the product MDP of $\Sys$ and $\Aut$. Then, for every controller $\pi^\times = (\pi^\Sys,\pi^\Aut)$ in $\Sysx$ with $\pi^\Sys = \pi$ and for every initial state $(x_0,\qinit) \in \Initx$, we have
	\[ \mathbb{P}^{\Sys,\pi}_{x_0}\Big[ \varphi \Big] \geq \mathbb{P}^{\Sysx,\pi^\times}_{(x_0,\qinit)}\Big[ \Buchi(\mathbb{R}^n \times \acc) \Big]. \]
\end{proposition}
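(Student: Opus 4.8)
The plan is to establish the inequality by constructing, for each path of the product SDS, a corresponding path of $\Sys$ in a measure-preserving way, and showing that this correspondence maps accepting product-paths into $\varphi$-satisfying paths of $\Sys$. First I would fix an initial state $(x_0, \qinit) \in \Initx$ and the controller $\pi^\times = (\pi, \pi^\Aut)$. I observe that the underlying probability spaces of $\Sys$ (under $\pi$) and of $\Sysx$ (under $\pi^\times$) are both driven by the same i.i.d.\ disturbance sequence $(w_t)_{t \geq 0}$ with $w_t \sim d$; that is, both can be realized on the common sample space $\DS^{\mathbb{N}_0}$ with the product measure $d^{\otimes \mathbb{N}_0}$. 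The key point is that the $\SS$-component of the product dynamics is exactly the $\Sys$-dynamics under $\pi$: since $\pi^\Sys = \pi$ and $\pi^\Sys(x,q)$ depends on $q$ in general, but by assumption equals $\pi(x)$, the update $x_{t+1} = f(x_t, \pi^\Sys(x_t,q_t), w_t) = f(x_t, \pi(x_t), w_t)$ coincides with the $\Sys$-update and does \emph{not} depend on the automaton component $q_t$. Hence, under this coupling, the projection of a product path onto its first coordinate is almost surely equal to the path of $\Sys$ generated from $x_0$ with controller $\pi$ and the same disturbance sequence. This gives a coupling of $\mathbb{P}^{\Sys,\pi}_{x_0}$ and $\mathbb{P}^{\Sysx,\pi^\times}_{(x_0,\qinit)}$ under which the $\Sys$-path is a deterministic function of the $\Sysx$-path.

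Next I would argue the containment of events. Suppose a product path $\path^\times = (x_0,q_0)(x_1,q_1)\dots$ belongs to $\Buchi(\mathbb{R}^n \times \acc)$, i.e., $q_i \in \acc$ for infinitely many $i$. Since $q_0 = \qinit$ and $q_{i+1} = \pi^\Aut(x_i,q_i) \in \Delta(q_i, L(x_i))$ by the constraint imposed on controllers of the product SDS, the sequence $q_0, q_1, \dots$ is a valid run of the LDBA $\Aut$ on the infinite word $(L(x_i))_{i=0}^\infty$; moreover, it is a B\"uchi-accepting run because it visits $\acc$ infinitely often. Therefore the word $(L(x_i))_{i=0}^\infty$ is accepted by $\Aut$, and since $\Aut$ recognizes exactly the language of $\varphi$, the projected path $\path = x_0 x_1 \dots$ satisfies $\varphi$. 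Combined with the coupling from the previous step, this shows that the event $\Buchi(\mathbb{R}^n \times \acc)$ in the $\Sysx$-space is mapped into the event $\varphi$ in the $\Sys$-space under the projection, whence $\mathbb{P}^{\Sysx,\pi^\times}_{(x_0,\qinit)}[\Buchi(\mathbb{R}^n \times \acc)] \leq \mathbb{P}^{\Sys,\pi}_{x_0}[\varphi]$, which is the claim.

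I expect the main subtlety to be the measure-theoretic bookkeeping of the coupling rather than any conceptual obstacle: one must verify that the map sending a disturbance sequence to the $\Sysx$-path, and then projecting to the $\Sys$-path, is measurable and pushes $d^{\otimes \mathbb{N}_0}$ forward to $\mathbb{P}^{\Sys,\pi}_{x_0}$ and $\mathbb{P}^{\Sysx,\pi^\times}_{(x_0,\qinit)}$ respectively — this uses the Borel-measurability assumptions on $\SS$, $\IS$, $\DS$, $f$, $\pi$, and $L$ stated in Section~\ref{sec:problem} and the standard construction of the path-space probability measure from~\cite{Puterman94}. The only place the inequality can be strict (rather than an equality) is that, with a fixed memoryless $\pi^\Aut$, the controller may fail to resolve the LDBA nondeterminism optimally, so some $\Sys$-paths satisfying $\varphi$ may not lift to accepting product paths; this is exactly why the statement is an inequality and not an equality, and it is consistent with the intended use of the proposition as a sound (but not necessarily complete) reduction.
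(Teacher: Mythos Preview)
Your proposal is correct and follows essentially the same approach as the paper: both arguments identify a product path with the corresponding $\Sys$-path (the paper via an explicit map $g:\Pi\to\Pi^\times$, you via the common disturbance-space coupling and projection), observe that the automaton component of a product path is a valid run of $\Aut$ on the induced word, and conclude that B\"uchi-accepting product paths project to $\varphi$-satisfying $\Sys$-paths. Your treatment is somewhat more explicit about the measure-theoretic coupling, but the logical skeleton is the same.
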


\begin{proof}
	Consider a controller $\pi^\times = (\pi^\Sys,\pi^\Aut)$ in $\Sysx$ with $\pi^\Sys = \pi$. Let $\Pi$ be the set of all paths in $\Sys$, and $\Pi^\times$ be the set of all paths in $\Sysx$. Define the map $g: \Pi \rightarrow \Pi^\times$ via
	\[ g(x_0,x_1,x_2,\dots) = ((x_0,q_0), (x_1,\pi^\Aut(x_0,q_0)), (x_2,\pi^\Aut(x_1,q_1)), \dots ). \]
	Now, observe that
	\begin{equation*}
	\begin{split}
		\mathbb{P}^{\Sys,\pi}_{x_0}\Big[ \varphi \Big] &= \mathbb{P}^{\Sys,\pi}_{x_0}\Big[ (x_i)_{i=0}^\infty \in \Pi \mid (x_i)_{i=0}^\infty \models \varphi \Big] \\
		&= \mathbb{P}^{\Sys,\pi}_{x_0}\Big[ (x_i)_{i=0}^\infty \in \Pi \mid (L(x_i))_{i=0}^\infty \text{ infinite word accepted by } \Aut\Big] \\
		&\geq \mathbb{P}^{\Sys,\pi}_{x_0}\Big[ (x_i)_{i=0}^\infty \in \Pi \mid (g(x_i))_{i=0}^\infty \text{ visits infinitely many times } \mathbb{R}^n \times \acc\Big] \\
		&=\mathbb{P}^{\Sysx,\pi^\times}_{(x_0,\qinit)}\Big[ \Buchi(\mathbb{R}^n \times \acc) \Big],
	\end{split}
	\end{equation*}
	where the inequality holds due to the first set of paths in $\Pi$ containing the second set of paths in $\Pi$. This proves the proposition claim. \hfill\qed
\end{proof}

The inequality in Proposition~\ref{thm:product} is strict  whenever $\pi^\Aut$ chooses a sub-optimal resolution of non-determinisms in the LDBA $\Aut$. Consider, e.g., the property $\F\G\, a$ (eventually always $a$), whose LDBA representation has three states $q_0, q_1, q_2$. The initial state is $q_0$, and upon seeing an $a$ the controller $\pi^{\Aut}$ needs to non-deterministically decide whether to wait at $q_0$ or to proceed to the accepting state $q_1$. The system can stay in $q_1$ as long as only $a$ is seen, and seeing a $!a$ forces the automaton (deterministically) to the rejecting sink $q_2$. For $\pi^\mathcal{S} = \pi$, satisfying  the specification $\F\G\, a$ on the SDS with probability $p>0$, if $\pi^{\Aut}$ always chooses to stay at $q_0$ after seeing every $a$, then the left side of the inequality will be $p$ but the right side will be $0$.

\smallskip\noindent{\bf Rejecting states.} An LDBA $\Aut = \tup{\Q,\qinit,\Sigma,\tran,\acc}$ induces a graph over its states with edges defined by all possible nondeterministic transitions in the automaton, i.e., it induces a graph $G = (\Q, E)$ with $E = \{(q,q') \in \Q\times\Q \mid \exists \sigma \in \Sigma.\, q' \in \Delta(q,\sigma)\}$. 

\begin{wrapfigure}{r}{0.5\linewidth}
	\centering
	\begin{tikzpicture}
		\node[state,initial,align=center]	(q0)		at	(0,0)	{$q_0$};
		\node[state,accepting,align=center]	(q1)	[right=of q0]	{$q_1$};
		
		\path[->]
			(q0)	edge[bend left]		node[above]	{$x\leq 0$}	(q1)
					edge[loop above]	node[above]	{$x > 0$}	()
			(q1)	edge[bend left]		node[below]	{$x>0$}		(q0)
					edge[loop above]	node[above]	{$x\leq 0$}	();
	\end{tikzpicture}
	\caption{The specifications $\G\F\,(x\leq 0)$ from Example~\ref{ex:random walk:1} represented using LDBA, which in this case is a deterministic B\"uchi automaton, where $q_1$ is the B\"uchi accepting state.}
	\label{fig:random walk: specs}
\end{wrapfigure}
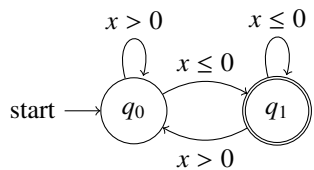
We denote by $\Qreject \subseteq \Q$ the set of all states in the LDBA from which no state in the set of accepting states $\acc$ can be reached in the graph $G$. Thus, if an infinite run $q_0,q_1,\dots$ in the LDBA contains a state in $\Qreject$, then it cannot be an accepting run as it will contain no states in $\acc$ after the first occurence of a state in $\Qreject$. To that end, we call $\Qreject$ the set of {\em rejecting states} in the LDBA.

\begin{example}\label{ex:random walk:2}
	The specification $\G\F\,(x\leq 0)$ from the 1D random walk is shown in the LDBA form in Figure~\ref{fig:random walk: specs}.
	The atomic predicates are given by $(x\leq 0)$ and $(x>0)$.
	The product state remains in the automaton state $q_0$ until $x>0$ is true, and each time the target $x\leq 0$ is true, the automaton visits the state $q_1$, which is the accepting state.
\end{example}
\section{Supermartingale Certificate for $\omega$-regular Specifications}\label{sec:theory}

We now present the theory behind our novel supermartingale certificates for quantitative $\omega$-regular specifications, which is the main technical contribution of our work.
Definition~\ref{def:certificate} formally defines the certificates and Theorem~\ref{thm:soundness} establishes their soundness. 
The computational aspects of the certificate will be discussed in Section~\ref{sec:algo}. We also assume that a controller is provided as input, and in Section~\ref{sec:algo} we will show how to compute it when it is not provided.

In what follows, suppose the following are given: an SDS $\Sys=\tup{\SS,\IS,\DS,d,f,\Init}$, a controller $\pi$ in $\Sys$, an $\omega$-regular specification $\varphi$ over a set of atomic propositions $\AP$, and a minimum probability  $p \in [0,1]$ with which the specification should be satisfied. 

\subsection{Intuitive Overview}
\label{sec:theory:overview}

 Let $\Aut = (\Q,\qinit,2^P,\tran,\acc)$ be an LDBA that accepts the same set of infinite words over $2^P$ as the specification $\varphi$. Our certificate is defined over the product SDS $\Sysx$ of the SDS $\Sys$ and the LDBA $\Aut$. By Proposition~\ref{thm:product}, this reduces the problem of certifying
\[ \centering \textit{the specification $\varphi$ is satisfied with probability $\geq p$} \]
to the problem of certifying that 
\begin{equation*}
\begin{split}
	\centering &\textit{there exists a controller $\pi^\times = (\pi^\Sys,\pi^\Aut)$ in $\Sysx$ with $\pi^\Sys = \pi$, such that} \\
	 \textit{the set } &\textit{of states $\mathbb{R}^n \times \acc$ in $\Sysx$ is reached infinitely many times with probability $\geq p$.} \\
\end{split}
\end{equation*}
To prove this, it suffices to show that there exists a controller $\pi^\times = (\pi^\Sys,\pi^\Aut)$ in $\Sysx$ with $\pi^\Sys = \pi$, such that the following two properties are satisfied:
\begin{compactenum}
	\item {\em Safety property.} There exists a set of states $\SI$ in the product SDS $\Sysx$ that contains all initial states $\Init \times \{q_0\}$, no rejecting states $\mathbb{R}^n \times \Qreject$, and such that $\SI$ is left with probability at most $1-p$. In other words, the pair $(\SI,p)$ is a {\em stochastic invariant}~\cite{ChatterjeeNZ17}. 
	\item {\em Liveness property.} A random run in the product SDS $\Sysx$ either visits $\mathbb{R}^n \times \acc$ infinitely many times or eventually leaves the set $\SI$ with probability $1$.
\end{compactenum}
The conjunction of the safety and the liveness properties guarantees that the set of states $\mathbb{R}^n \times \acc$ is visited infinitely many times with probability at least $p$.

The safety-liveness decomposition of the problem is reflected in the design of our supermartingale certificate. 
In particular, our certificate $\Cert$ consists of two components $\CertS$ and $\CertB$, called the \emph{safety certificate} and the \emph{liveness certificate}, which are connected in a hierarchical fashion:
the certificate $\CertS$ is independent of $\CertB$ and generates the stochastic invariant $\SI$, whereas the certificate $\CertB$ depends on $\CertS$ to use the set $\SI$ as a ``safety net'' and make sure that the liveness condition is fulfilled.
A crucial aspect of this hierarchical composition is that the two certificates agree on their choices of control inputs and resolutions of the nondeterminisms in the LDBA at all time.
From this, we obtain a single joint controller $\pi^\times = (\pi^\Sys,\pi^\Aut)$ in $\Sysx$ with $\pi^\Sys = \pi$, and formally certify that both the safety and the liveness property specified above are satisfied.

\smallskip\noindent{\bf Invariants.}  Supermartingale certificates need to satisfy a set of conditions in the reachable parts of the state space, starting from the initial states. Since computing the exact set of reachable states is in general infeasible, we define our supermartingale certificates with respect to an over-approximation, called a {\em (supporting) invariant}. In this section, we assume that the invariant is given, and in Section~\ref{sec:algo}, we will describe how to automatically compute it along with the certificate and the controller. 

Formally, a state $x$ in the SDS $\Sys$ is {\em reachable}, if there exists a path $x_0,x_1,\dots$ in $\Sys$ with $x_t = x$ for some $t \in \mathbb{N}_0$. An {\em invariant} in $\Sys$ is a set of states $I \subseteq \SS$ which contains all reachable states in $\Sys$. The definition of an invariant in a product SDS $\Sysx$ is analogous.

Invariants are integral components of supermartingale certificates, because they determine within which part of the state space the certificate conditions are required to be fulfilled; outside of the invariant, the certificates may behave arbitrarily.
Therefore, our formal definition of \lsm certificates will include their supporting invariants as their ``domains'' of operation.

\subsection{Supermartingale Certificates for LDBA Specifications}

To formalize the certificate components $\CertS$, $\CertB$, and $I$, as described above, we build upon and generalize the existing supermartingale certificates for reasoning about quantitative safety and qualitative recurrence (i.e., B\"uchi) properties.

\smallskip\noindent{\bf Safety certificates.} Our safety certificate $\CertS$ draws insight from \emph{repulsing supermartingales}~\cite{ChatterjeeNZ17}, which were introduced for reasoning about quantitative safety. Intuitively, given some set of unsafe states $U \subseteq \SS$ in an SDS $\Sys$ under a controller $\pi$, a repulsing supermartingale for $U$ is a function $\CertS: \SS \rightarrow \mathbb{R}$, along with an invariant $I$, which to every system state assigns a real value that is required to satisfy the following conditions at the states in $I$: \textbf{(S1)}~at initial states, the value of $\CertS$ is below some negative threshold $\eta \leq 0$, \textbf{(S2)}~at unsafe states in $U\cap I$, the value of $\CertS$ is above the threshold $0$, and \textbf{(S3)}~at states in $I$ in which $\CertS \leq 0$, the value of $\CertS$ must strictly decrease in expectation by $\epsilon > 0$ upon every one-step execution of the SDS under the controller $\pi$, while also ensuring that the absolute change in its value lies within some interval of length $M > 0$, and ensuring that the next state remains within $I$. It was shown that, if a repulsing supermartingale for the unsafe set exists, then the SDS stays within the set of states $\SI = \{x \in \SS \mid \CertS(x) < 0\}$---the \emph{stochastic invariant}---and thus does not reach the unsafe set $U$ with probability at least $p \geq 1 - \exp(\frac{8 \cdot \epsilon \cdot \eta}{M^2})$~\cite{WangS0CG21} (the bound of~\cite{WangS0CG21} is an improved version of the bound in~\cite{ChatterjeeNZ17}).
	
For us, the set of unsafe states of the product SDS $\Sysx$ is given by $U = \mathbb{R}^n \times \Qreject$, where recall that $\Qreject$ is the set of rejecting states in the LDBA, and our goal is to ensure that there exists a controller $\pi^\times_\safe = (\pi^\Sys,\pi^\Aut)$ in $\Sysx$ with $\pi^\Sys = \pi$, under which the product SDS is safe with probability at least $p$.
To this end, we extend repulsing supermartingales by requiring condition (S3) to be satisfied only for {\em some} outgoing LDBA transition at a state in $\Sysx$---call it condition \textbf{(S3+)}.
The stochastic invariant also gets generalized to the state space of the product SDS in the obvious way: $\SI^\times = \set{(x,q)\in \SS\times \Q\mid \CertS(x,q) < 0}$.   The controller $\pi^\Aut$ then picks that LDBA transition towards satisfying the quantitative safety property.

\smallskip
\noindent{\bf Liveness certificates.} Our liveness certificate $\CertB$ draws insight from the supermartingale certificate of Chakarov et al.~\cite{ChakarovVS16} for proving that a given set of target states $T\subseteq \SS$ in $\Sys$ under a given controller $\pi$ is visited infinitely many times with probability~$1$. Intuitively, a supermartingale certificate for $T$ is a function $\CertB: \SS \rightarrow \mathbb{R}$, along with an invariant $I$, which to every system state assigns a real value that is required to satisfy the following conditions at the states in $I$: \textbf{(L1)}~the value of $\CertB$ is non-negative, \textbf{(L2)}~at non-target states outside $T$, the value of $\CertB$ must strictly decrease in expected value by $\epsilon > 0$ upon every one-step execution of the system under the controller $\pi$, and \textbf{(L3)}~at target states in $T \cap I$, the value of $\CertB$ is allowed to increase in expectation by at most $M>0$ in every one-step execution of the system under the controller $\pi$, while making sure that the next state does not go outside of $I$.
 If such a certificate for the target set $T$ exists, then $T$ is reached infinitely many times with probability $1$~\cite{ChakarovVS16}.

For us, the set of target states of the product SDS $\Sysx$ is given by $T=\mathbb{R}^n\times\acc$, where recall that $\acc$ is the set of B\"uchi accepting set in the LDBA.
In our \lsm certificate, we extend the certificate of Chakarov et al.~\cite{ChakarovVS16} in two ways. First, we present a supermartingale certificate for proving that, with probability~$1$, {\em either} the B\"uchi accepting set $\SS\times\acc$ of the product SDS is reached infinitely many times {\em or} the stochastic invariant set $\SI^\times$ is eventually left---call it condition \textbf{(L3+)}. 
This is in line with the overview in Section~\ref{sec:theory:overview} and our requirements for a liveness certificate. Second, in the product MDP $\Sysx$, our goal is again to ensure that there exists a controller $\pi^\times_\live = (\pi^\Sys,\pi^\Aut)$ in $\Sysx$ with $\pi^\Sys = \pi$, under which this property is satisfied. This is accounted for by requiring conditions (L2) and (L3) of the certificate of Chakarov et al.~\cite{ChakarovVS16} to be satisfied only for {\em some} outgoing LDBA transition at a state in $\Sysx$.

\smallskip
\noindent\textbf{The hierarchical composition of safety and liveness certificates.}
The challenge in combining the safety and liveness components lies in ensuring that the properties implied by the two components are satisfied with respect to the {\em same controller} $\pi^\times = (\pi^\Sys,\pi^\Aut)$ in $\Sysx$ with $\pi^\Sys = \pi$; in other words, we need to ensure that the controllers $\pi^\times_\safe$ and $\pi^\times_\live$ are the same. 
Otherwise, we cannot conclude that the set of states $\mathbb{R}^n \times \acc$ is reached infinitely many times with probability at least $p$. 
In particular, we need to ensure that condition (S3+) on one hand, and conditions (L2) and (L3+) on the other hand, are always satisfied with respect to the same resolution of non-determinism, i.e.,~with respect to the same transition $q' \in \Delta(q,a)$ in the LDBA. We will pinpoint how this is achieved after presenting our novel \lsm certificate in the following.

\begin{definition}[\lsm certificates]\label{def:certificate}
	Consider an SDS $\Sys$ and a controller $\pi$ in it. Let $\Aut = \tup{\Q,\qinit,2^P,\tran,\acc}$ be an LDBA, $\Sysx$ be the product SDS of $\Sys$ and $\Aut$, and $I\subseteq \SS\times \Q$ be an invariant in $\Sysx$.

	An {\em \lsm certificate} is a triple $\Cert = (\CertS, \CertB, I)$, where $\CertS: \SS \times \Q \rightarrow \mathbb{R}$ and $\CertB: \SS \times \Q \rightarrow \mathbb{R}$ assign real values to each state in $\Sysx$, and $I\subseteq \SS\times \Q$ is an invariant in $\Sys^\times$. We require that there exist $\eta^S \leq 0$, $\epsilon^S > 0$, $M^S>0$, and $\beta^S \in \mathbb{R}$, as well as $\epsilon^L>0$, $M^L > 0$, such that the following conditions are satisfied:
	\begin{enumerate}[(a)]
			\item {\em Initial condition of the invariant:} the initial state is inside the invariant, i.e.,
				\[ \forall (x,q)\in \Init^\times.\, (x,q)\in I. \] \label{constraint:cert:a-}
			\item {\em Initial condition of the safety certificate:} the value of $\CertS$ at initial states is at most $\eta^S$, i.e.,
				\[ \forall x \in \Init.\,   \CertS(x, \qinit) \leq \eta^S.\tag{\text{{\color{black!30!white}Cond. \textbf{S1}}}}\] \label{constraint:cert:a}
			\item {\em Safety condition of the safety certificate:} the value of $\CertS$ at reachable rejecting states is non-negative, i.e.,			\[ \forall (x,q) \in (\SS\times\Qreject)\cap I.\,  \CertS(x, q) \geq 0. \tag{\text{{\color{black!30!white}Cond. \textbf{S2}}}}\]\label{constraint:cert:b}
			\item {\em Non-negativity condition of the liveness certificate:} the value of $\CertB$ is non-negative at states in the invariant $I$, which over-approximates the set of reachable states, i.e.,
			\[ \forall (x,q)\in I.\,  \CertB(x, q) \geq 0. \tag{\text{{\color{black!30!white}Cond. \textbf{L1}}}}\] \label{constraint:cert:c}
			\item {\em Strict expected decrease of the liveness certificate until $\SS \times \acc$ or} $\SI = \{(x,q) \in \Sysx \mid \CertS(x,q) < 0\}$ \emph{is reached with the safety condition:} we have
			\begin{equation*}
				\begin{split}
					&\forall (x,q) \in \SS\times(\Q \setminus (\acc \cup \Qreject))\cap I.\, \CertS(x,q) \leq 0 \Longrightarrow \\
					&\bigvee_{a \in \Sigma}\bigvee_{q' \in \tran(q,a)}
					\left[	
					\begin{array}{c}
						\overbrace{\left(
							\begin{array}{c}
								\Big[ x\ \models a \land \forall w\in \DS.\, (f(x,\pi(x),w),q')\in I \land \\
								 \CertS(x,q) \geq \mathbb{E}^{\pi}_{w}[\CertS(f(x,\pi(x),w),q')] + \epsilon^S \,\land \\
								 \forall w \in \DS.\, \beta^S \leq \CertS(x,q) - \CertS(f(x,\pi(x),w),q') \leq \beta^S + M^S \Big]
							\end{array}
						\right)}^{\safetycond\quad \textup{{\color{black!30!white}Cond. \textbf{S3+}}}}\\
						\bigwedge		\\	
						\underbrace{\CertB(x,q) \geq \mathbb{E}^{\pi}_{w}[\CertB(f(x,\pi(x),w),q')] + \epsilon^L}_{\textup{{\color{black!30!white}Cond. \textbf{L2}}}} 				
					\end{array}
					\right]
				\end{split}
			\end{equation*}\label{constraint:cert:d}
			\item {\em Bounded expected increase of the liveness certificate in $(\SS \times \acc) \backslash \SI$ with the safety condition:} we have
			\begin{equation*}
				\begin{split}
					&\forall (x,q) \in (\SS\times \acc)\cap I.\,  \CertS(x,q) \leq 0 \Longrightarrow  \\
					& \bigvee_{a \in \Sigma, \tran(q,a)=q'} 
					\left[ 
						\begin{array}{c}
							\overbrace{\safetycond}^{\textup{{\color{black!30!white}Cond. \textbf{S3+}}}}\\
							\bigwedge\\
							\underbrace{\mathbb{E}^{\pi}_{w}[\CertB(f(x,\pi(x),w),q')] \leq \CertB(x,q) + M^L}_{\textup{{\color{black!30!white}Cond. \textbf{L3+}}}}\\							
						\end{array}											  			
					\right].
				\end{split}
			\end{equation*}\label{constraint:cert:e}
	\end{enumerate}
\end{definition}
Conditions~(b), (c), and (d) in Def.~\ref{def:certificate} formalize the (S1), (S2), and (L1) conditions that were informally introduced beforehand.
Conditions~(e) and (f) combine (S3+) with (L2) and (L3+), respectively, which are concerned about states that are, respectively, outside of $\acc\cup \Qreject$ and inside of $\acc$.
Additionally, in both (e) and (f), the left hand sides of the implications require that the state $(x,q)$ is in the stochastic invariant $\SI$, which is formalized by the inequality $\CertS(x,q)\leq 0$, and the right hand sides require the satisfaction of (S3+) which is captured using the formula $\safetycond$.
The formula $\safetycond$ encodes that the strict expected decrease and the bounded differences conditions are satisfied at a state $(x,q)$ of $\Sysx$, upon taking the LDBA transition to $q'$ by using $a$. 
The assertion $x \models a$ guarantees the satisfaction of $a$ at $x$, ensuring the availability of the LDBA transition to $q' \in \Delta(q,a)$. 
The clause $\CertS(x,q) \geq \mathbb{E}^{\pi}_{w}[\CertS(f(x,\pi(x),w),q')] + \epsilon^S$ encodes strict expected decrease, i.e.,~that the value of $\CertS$ decreases in expected value by at least $\epsilon^S$ upon one-step execution of the SDS under the controller $\pi$. 
The clause $\forall w \in \DS.\, \beta^S \leq \CertS(x,q) - \CertS(f(x,\pi(x),w),q') \leq \beta^S + M^S$ encodes the bounded differences condition, i.e.,~that for every stochastic disturbance $w \in \DS$, the change in the value of $\CertS$ is contained in the interval $[\beta^S,\beta^S+M^S]$.
Finally, the clause $\forall w\in \DS.\, (f(x,\pi(x),w),q')\in I$ guarantees that after the one-step execution of the SDS under the controller $\pi$, the new state is within the invariant $I$.
Moreover, the right hand sides of the implications of (d) and (e) also respectively implement (L2) and (L3+), which are straightforward.

The following theorem establishes the soundness of our certificate for proving satisfaction of quantitative $\omega$-regular specifications in SDSs.

\begin{theorem}[Soundness of \lsm certificates]\label{thm:soundness}
	Consider an SDS $\Sys$ and a controller $\pi$. Let $\varphi$ be an $\omega$-regular specification in $\Sys$, $p \in [0,1]$ be a probability threshold, and $\Aut = (\Q,\qinit,2^\AP,\tran,\acc)$ be an LDBA that accepts the same language as $\varphi$. Let $\Sysx$ be the product SDS of $\Sys$ and $\Aut$.
	
	Suppose that there exists an \lsm certificate $\Cert = (\CertS, \CertB, I)$ with the associated constants $\eta^S$, $\epsilon^S$, and $M^S$, such that $p \leq 1 - \exp{\frac{8\cdot\eta^S\cdot\epsilon^S}{(M^S)^2}}$. 
	Then, there exists a controller $\pi^\times = (\pi^\Sys,\pi^\Aut)$ in $\Sysx$ with $\pi^\Sys = \pi$, such that for every initial state $(x_0,\qinit) \in \Initx$, we have
	\begin{equation}\label{eq:soundness}
		\mathbb{P}^{\Sys,\pi}_{x_0}\Big[ \varphi \Big] \geq \mathbb{P}^{\Sysx,\pi^\times}_{(x_0,\qinit)}\Big[ \Buchi(\mathbb{R}^n \times \acc) \Big] \geq p.
	\end{equation}
\end{theorem}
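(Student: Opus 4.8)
The plan is to define the joint controller $\pi^\times=(\pi^\Sys,\pi^\Aut)$ explicitly from the disjunctions in Definition~\ref{def:certificate}, and then to exploit the structure of the certificate twice: once to obtain the safety guarantee (the stochastic invariant $\SI$ is left with probability at most $1-p$), and once to obtain the liveness guarantee (conditioned on staying in $\SI$, the B\"uchi set $\mathbb{R}^n\times\acc$ is hit infinitely often almost surely). Combining the two yields $\mathbb{P}^{\Sysx,\pi^\times}_{(x_0,\qinit)}[\Buchi(\mathbb{R}^n\times\acc)]\ge p$, and Proposition~\ref{thm:product} upgrades this to the claimed bound on $\mathbb{P}^{\Sys,\pi}_{x_0}[\varphi]$.

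First I would \emph{construct the controller}. At each reachable state $(x,q)\in I$, conditions~(e) and~(f) (depending on whether $q\in\acc$, $q\notin\acc\cup\Qreject$, or $q\in\Qreject$; in the last case the certificate is vacuous and any transition is fine) guarantee at least one pair $(a,q')$ witnessing the disjunction whenever $\CertS(x,q)\le 0$. Fix $\pi^\Aut(x,q)=q'$ to be such a witness (by a measurable-selection argument over the finite sets $\Sigma,\Q$; this is routine since the conditions are Borel in $x$) and set $\pi^\Sys(x,q)=\pi(x)$. When $\CertS(x,q)>0$, i.e.\ outside $\SI$, choose $\pi^\Aut$ arbitrarily but measurably. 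The key point of the hierarchical design is that this \emph{single} choice of $q'$ simultaneously satisfies the $\safetycond$ clause (Cond.~S3+) and the liveness clause (Cond.~L2 or L3+), so both arguments below run against the same process.

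Next, the \textbf{safety argument}: restrict attention to the process under $\pi^\times$ and consider the set of states where $\CertS\le 0$. On this set, Cond.~S3+ inside the chosen transition gives exactly the three requirements of a repulsing supermartingale for the unsafe set $U=\mathbb{R}^n\times\Qreject$ relative to the invariant $I$: the value at initial states is $\le\eta^S$ (Cond.~S1, i.e.\ item~(b)); the value on $U\cap I$ is $\ge 0$ (Cond.~S2, item~(c)); and on $\{\CertS\le 0\}\cap I$ the expected one-step decrease is at least $\epsilon^S$ with bounded differences in $[\beta^S,\beta^S+M^S]$ and the successor stays in $I$. Invoking the repulsing-supermartingale bound of Wang et al.~\cite{WangS0CG21} (an improvement of~\cite{ChatterjeeNZ17}), the process leaves $\SI=\{\CertS<0\}$---and in particular ever reaches $U$, hence ever enters $\Qreject$---with probability at most $\exp\!\big(\tfrac{8\cdot\eta^S\cdot\epsilon^S}{(M^S)^2}\big)\le 1-p$. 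Equivalently, with probability $\ge p$ the run stays forever in $\SI$ and never visits a rejecting state.

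Then, the \textbf{liveness argument}: I would show that, almost surely, \emph{either} the run eventually leaves $\SI$ \emph{or} it visits $\mathbb{R}^n\times\acc$ infinitely often. Consider the stopped process that runs until the first exit from $\SI$. As long as we are inside $\SI\cap I$ and outside $\acc\cup\Qreject$, Cond.~L2 (the liveness clause of item~(e), under the chosen transition) forces $\CertB$ to decrease in expectation by $\epsilon^L$; at states in $\acc\cap\SI\cap I$, Cond.~L3+ (item~(f)) allows $\CertB$ to increase by at most $M^L$; and $\CertB\ge0$ throughout $I$ by Cond.~L1 (item~(d)). This is precisely the Chakarov--Voronin--Sankaranarayanan certificate~\cite{ChakarovVS16} for the target set $\mathbb{R}^n\times\acc$, applied to the process stopped on exiting $\SI$; its soundness theorem yields that, on the event that the run stays in $\SI$ forever, $\mathbb{R}^n\times\acc$ is visited infinitely often with probability~$1$. (The standard proof proceeds by noting that $\CertB$ restricted to excursions avoiding $\acc$ is a nonnegative supermartingale with expected decrease $\epsilon^L$, hence each such excursion is a.s.\ finite, so infinitely many returns to $\acc$ occur.)

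Finally I would \textbf{combine}. Let $E_{\safe}$ be the event ``the run never visits $\Qreject$'' and note $E_{\safe}\supseteq\{\text{run stays in }\SI\text{ forever}\}$, while on $\{\text{run leaves }\SI\}$ the liveness disjunct ``leaves $\SI$'' is irrelevant---what we actually need is: on the $\ge p$-probability event that the run stays in $\SI$ forever and avoids $\Qreject$, the liveness certificate forces infinitely many visits to $\mathbb{R}^n\times\acc$. Hence $\mathbb{P}^{\Sysx,\pi^\times}_{(x_0,\qinit)}[\Buchi(\mathbb{R}^n\times\acc)]\ge \mathbb{P}^{\Sysx,\pi^\times}_{(x_0,\qinit)}[\text{stay in }\SI\text{ forever}]\ge p$, and Proposition~\ref{thm:product} gives $\mathbb{P}^{\Sys,\pi}_{x_0}[\varphi]\ge\mathbb{P}^{\Sysx,\pi^\times}_{(x_0,\qinit)}[\Buchi(\mathbb{R}^n\times\acc)]\ge p$, as in~\eqref{eq:soundness}.

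The main obstacle I anticipate is the \emph{bookkeeping of the stopped process and the interaction of the two events}: one must be careful that the liveness supermartingale argument is carried out on the process stopped at the exit time from $\SI$, so that Cond.~L2/L3+ (which only hold where $\CertS\le0$) are actually in force at every step considered, and that conditioning on the probability-$\ge p$ safety event does not break the (super)martingale property---this is handled by working with the stopped process and only afterwards intersecting with the safety event, rather than conditioning first. A secondary technical point is the measurable selection of $\pi^\Aut$ from the finite disjunction, which is standard but should be stated. Everything else reduces to quoting the two black-box results~\cite{WangS0CG21} and~\cite{ChakarovVS16}.
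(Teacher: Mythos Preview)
Your proposal is correct and follows essentially the same approach as the paper: construct $\pi^\Aut$ from the witnesses of the disjunctions in conditions~(e)/(f), invoke the repulsing-supermartingale bound of~\cite{WangS0CG21} for safety, invoke the Chakarov et al.\ certificate~\cite{ChakarovVS16} for liveness, and combine via Proposition~\ref{thm:product}. The only cosmetic difference is that where you work with the process stopped at the first exit from $\SI$, the paper equivalently builds a modified product SDS $\Sysx_{\text{new}}$ in which every state with $\CertS\ge 0$ is turned into a sink, and applies~\cite{ChakarovVS16} to the enlarged target $(\mathbb{R}^n\times\acc)\cup\neg\SI$ in $\Sysx_{\text{new}}$; this sidesteps the conditioning issue you flag in exactly the way you anticipate, and the paper's measurable selection is done by fixing an ordering on $\Q$ and picking the least witness.
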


\begin{proof}
	We first define $\pi^\Aut: \Sysx \rightarrow \Q$ in order to specify the controller $\pi^\times = (\pi^\Sys,\pi^\Aut)$ in $\Sysx$ with $\pi^\Sys = \pi$. In what follows, we fix an ordering of the states in $Q$. For each $(x,q) \in \Sysx$, we define $\pi^\Aut(x,q)$ as follows:
	\begin{compactitem}
		\item If $(x,q) \in I$, $\CertS(x,q) \leq 0$ and $q \in \Q \backslash (\acc \cup \Qreject)$, we define $\pi^\Aut(x,q)$ to be $q' \in \Q$ of the smallest order for which the predicate on the right-hand-side of condition~\eqref{constraint:cert:d} in Def.~\ref{def:certificate} is satisfied.
		\item If $(x,q) \in I$, $\CertS(x,q) \leq 0$ and $q \in \acc$, we define $\pi^\Aut(x,q)$ to be $q' \in \Q$ of the smallest order for which the predicate on the right-hand-side of condition~\eqref{constraint:cert:e} in Def.~\ref{def:certificate} is satisfied.
		\item If $(x,q) \in I$, $\CertS(x,q) \leq 0$ and $q \in \Qreject$, by condition (c) in Def.~\ref{def:certificate}, if $(x,q) \in I$ then $V^{\text{safe}}(x,q) \geq 0$. Hence, $V^{\text{safe}}(x,q)$ must be $0$. We define $\pi^\Aut(x,q)$ to be $q' \in \Q$ of the smallest order for which there exists a letter $a \in \Sigma$ with $x \models a$ and $q' \in \Delta(q,a)$. Such $q' \in \Q$ is guaranteed to exist, as it is always possible to take at least one transition in the LDBA.
		\item If $(x,q) \not\in I$ or $\CertS(x,q) > 0$, we define $\pi^\Aut(x,q)$ exactly as in the previous case above, i.e., $\pi^\Aut(x,q)$ is $q' \in \Q$ of the smallest order for which there exists a letter $a \in \Sigma$ with $x \models a$ and $q' \in \Delta(q,a)$.
	\end{compactitem}
	We note that fixing an order of states in $Q$ and picking a state $q'$ of the smallest order is done in order to ensure that $\pi^\Aut: \Sysx \rightarrow \Q$ is a measurable function, for the semantics of the controller to be mathematically well defined.
	
We now show that the controller $\pi^\times = (\pi^\Sys,\pi^\Aut)$ with $\pi^\Sys = \pi$ and $\pi^\Aut$ defined as above satisfies the theorem's claim. The first inequality follows from Proposition~\ref{thm:product}, hence we only need to prove the second inequality. We show that the certificate component $\CertS$ defines a repulsing supermartingale for the unsafe set $\mathbb{R}^n \times \Qreject$~\cite{ChatterjeeNZ17} in the product SDS $\Sysx$ controlled by our controller $\pi^\times = (\pi^\Sys,\pi^\Aut)$. To prove this, we need to show that $\CertS$ and $I$ satisfy conditions \textbf{(S1)}-\textbf{(S3)} defined earlier in this section. Condition~\textbf{(S1)} is implied by condition~\eqref{constraint:cert:a} in Definition~\ref{def:certificate}. Condition~\textbf{(S2)} is implied by condition~\eqref{constraint:cert:b} in Definition~\ref{def:certificate}. Finally, condition~\textbf{(S3)} is implied by conditions~\eqref{constraint:cert:d} and~\eqref{constraint:cert:e} in Definition~\ref{def:certificate}. Hence, $\CertS$ indeed defines a repulsing supermartingale for the unsafe set $\mathbb{R}^n \times \Qreject$, and $I$ is the respective invariant. Thus, by~\cite[Theorem~5.1]{WangS0CG21}, it follows that for every initial state $(x,\qinit) \in \SSx$, we have
\begin{equation}\label{eq:eq1}
	\mathbb{P}^{\Sysx,\pi^\times}_{(x_0,\qinit)} \Big[ \textit{Safe}\Big(\{(x,q) \in \SSx \mid \CertS(x,q) \geq 0\}\Big) \Big] \geq 1 - \exp{\frac{8\cdot\eta^S\cdot\epsilon^S}{(M^S)^2}} \geq p,
\end{equation} 
where the inequality indeed holds with respect to our controller $\pi^\times = (\pi^\Sys,\pi^\Aut)$ and $\textit{Safe}(U)$ denotes the set of all paths in $\Sysx$ that do not contain states from some set $U$.

Next, we show that, under our controller $\pi^\times = (\pi^\Sys,\pi^\Aut)$, either the set of states $\mathbb{R}^n \times \acc$ is visited infinitely many times or the set of states $\SI = \{(x,q) \in \SSx \mid \CertS(x,q) \leq 0\}$ is eventually left with probability $1$. To prove this, we first define a new product SDS $\Sysx_{\text{new}}$ by modifying its dynamics function $\Dynx\colon \SSx\times\ISx\times\DS\to \SSx$ to
\[ \Dynx_{\text{new}}((x,q),(u,q),w) = \begin{cases}
	\Dynx((x,q),(u,q),w), &\text{if } \CertS(x,q) < 0 \\
	(x,q), &\text{if } \CertS(x,q) \geq 0. 
\end{cases}
\]
Intuitively, we are redefining the product SDS to make the complement of $\SI$, i.e.,~the set of all states at which $\CertS(x,q) \geq 0$, into a sink where the system remains stuck forever once entered. Then, in order to prove that either the set of states $\mathbb{R}^n \times \acc$ is visited infinitely many times or the set of states $\SI = \{(x,q) \in \SSx \mid \CertS(x,q) \leq 0\}$ is eventually left with probability $1$, it suffices to prove that the set of states $(\mathbb{R}^n \times \acc) \cup \neg\SI$ in the {\em new product SDS} $\Sysx_{\text{new}}$ is visited infinitely many times with probability $1$. To show this, we prove that the certificate component $\CertB$ defines an instance of the supermartingale certificate of~\cite{ChakarovVS16} for proving that the target set $(\mathbb{R}^n \times \acc) \cup \neg\SI$ is reached infinitely many times with probability $1$ in the {\em new product SDS} $\Sysx_{\text{new}}$ controlled by our controller $\pi^\times = (\pi^\Sys,\pi^\Aut)$. To prove this, we need to show that $\CertB$ and $I$ satisfy conditions \textbf{(L1)}-\textbf{(L3)} defined earlier in this section. Condition~\textbf{(L1)} is implied by condition~\eqref{constraint:cert:c} in Definition~\ref{def:certificate}. Condition~\textbf{(L2)} is implied by condition~\eqref{constraint:cert:d} in Definition~\ref{def:certificate}. Finally, condition~\textbf{(L3)} is implied by conditions~\eqref{constraint:cert:e} in Definition~\ref{def:certificate} and by our construction of the new product SDS $\Sysx_{\text{new}}$, since it ensures that the value of $\CertB$ remains constant once a state in $\neg\SI$ is reached. 
Hence, $\CertB$ indeed defines an instance of the supermartingale certificate of~\cite{ChakarovVS16} for proving that the target set $(\mathbb{R}^n \times \acc) \cup \neg\SI$ is reached infinitely many times with probability $1$ in $\Sysx_{\text{new}}$, and it follows that in the product SDS $\Sysx$ we have
\begin{equation}\label{eq:eq2}
	\mathbb{P}^{\Sysx,\pi^\times}_{(x_0,\qinit)} \Big[ \textit{B\"uchi}(\mathbb{R}^n \times \acc) \cup \textit{Reach}(\neg \SI)  \Big] = 1,
\end{equation}
where the equality indeed holds with respect to our controller $\pi^\times = (\pi^\Sys,\pi^\Aut)$, $\textit{B\"uchi}(T)$ denotes the set of all paths in $\Sysx$ that visit states in $T$ infinitely many times, and $\textit{Reach}(T)$ denotes the set of all paths in $\Sysx$ that eventually visit a state in $T$.

Combining eq.~\eqref{eq:eq1} and eq.~\eqref{eq:eq2} above, we conclude that for every initial state $(x,\qinit) \in \SSx$ we have
\begin{equation*}
	\begin{split}
		\mathbb{P}^{\Sysx,\pi^\times}_{(x_0,\qinit)} \Big[ &\textit{B\"uchi}(\mathbb{R}^n \times \acc)\Big]  \geq \mathbb{P}^{\Sysx,\pi^\times}_{(x_0,\qinit)} \Big[ \textit{B\"uchi}(\mathbb{R}^n \times \acc) \cup \textit{Reach}(\neg \SI)  \Big]  - \mathbb{P}^{\Sysx,\pi^\times}_{(x_0,\qinit)} \Big[ \textit{Reach}(\neg \SI)  \Big] \\
		&\geq \mathbb{P}^{\Sysx,\pi^\times}_{(x_0,\qinit)} \Big[ \textit{B\"uchi}(\mathbb{R}^n \times \acc) \cup \textit{Reach}(\neg \SI)  \Big]  - \Big(1 -\mathbb{P}^{\Sysx,\pi^\times}_{(x_0,\qinit)} \Big[ \textit{Safe}(\neg \SI)  \Big]\Big) \\
		&\geq 1 - (1 - p) = p.
	\end{split}
\end{equation*}
This concludes the proof of the theorem claim.
\hfill\qed
	
\end{proof}

\begin{example}[\lsm certificates]
	Consider the 1D random walk from Example~\ref{ex:random walk:1} and \ref{ex:random walk:2} with specification $\G\F\,(x\leq 0)$.
	The following constitute an \lsm certificate:
	The invariant $I$ in $q_0$ is $73+(1/2)x$ and in $q_1$ is the constant $0$, $\CertS(\cdot,q_0)=\CertS(\cdot,q_1)=-9+(5/16)x$, $\CertB(\cdot,q_0) = (367/2) + (3/4)x$, $\CertB(\cdot,q_1) = (4747/128)-(1/256)x$, with the variables $\eta^S=-8$, $\epsilon^S=5/32$, $M^S=1$, $\beta^S = -11/32$, $\epsilon^L=3/8$, and $M^L = 260$.
	It can be checked that $\{\CertS,\CertB,I\}$ is a valid \lsm that satisfies all the constraints in Definition~\ref{def:certificate}.
\end{example}

\begin{remark}[On the incompleteness of \lsm certificates]\label{rem:incompleteness}
Our \lsm certificates are incomplete: it is possible that a certificate does not exist even though there exists a controller that fulfills the given $\omega$-regular specification.
The incompleteness of LDBSM certificates follows from the incompleteness of our safety (based on repulsing supermartingales~\cite{ChatterjeeNZ17}) and liveness certificates (based on almost-sure recurrence certificates of Chakarov et al.~\cite{ChakarovVS16}). The incompleteness of repulsing supermartingales was discussed in Chatterjee et al.~\cite{chatterjee2022sound}, and we omit the details. An example of incompleteness of the liveness certificate is when the target set is closed under system dynamics and cannot be left once entered. In this case, infinitely many visits to the target is equivalent to reaching it once. For almost-sure reachability, the liveness certificate satisfying conditions (L1) and (L2) (see Section~\ref{sec:theory:overview}) becomes a ranking supermartingale of Chakarov and Sankaranarayanan~\cite{chakarov2013probabilistic}. Ranking supermartingales are sound and complete for finite expected time reachability~\cite{fu2019termination}, but are incomplete for almost-sure reachability.
\end{remark}
\section{Polynomial Template-Based Synthesis Algorithms}\label{sec:algo}

We now present our algorithms for automated verification and control of stochastic dynamical systems via our \lsm certificates. While our theory on \lsm certificates in Section~\ref{sec:theory} is applicable to general SDS and controllers, the algorithms in this section are restricted to systems that can be specified via polynomial arithmetic.

\smallskip\noindent{\bf Assumption: Polynomial systems.} We assume that the system dynamics function $f$,  the controller $\pi$, and arithmetic expressions for all atomic propositions appearing in the $\omega$-regular specification $\varphi$ are {\em polynomial functions} over the state, control input, and stochastic disturbance variables. More generally, we allow the system dynamics function $f$ to be piecewise-polynomial. That is, it can be of the form
\begin{equation}\label{eq:piecewisepolynomial}
	f(x,u,w) = 
	\begin{cases}
		f_1(x,u,w), &\text{if } \land_{i=1}^{N_1} g_i^1(x) \leq 0, \\
		\,\,\,\,\,\,\dots & \\
		f_k(x,u,w), &\text{if } \land_{i=1}^{N_k} g_i^k(x) \leq 0, \\
	\end{cases}
\end{equation}
where all $f_i$'s and $g_i^j$'s are polynomial functions over their input variables.  Later, we will fix $d$ as the specified maximum degree of our polynomial templates, and we assume that the first $d$ moments of $f(x,u,W)$---with $W$ being the random variable representing the distribution over noise---be representible and given as polynomials over $x$ and $u$. For example, the system dynamics function in Example~1 is piecewise linear with $k=2$.

\smallskip\noindent{\bf Input.} Our verification and control algorithms take as input an SDS $\Sys$ whose system dynamics function is piecewise polynomial, an LDBA $\Aut$ for an $\omega$-regular specification $\varphi$ defined over a set of polynomial atomic prepositions $P$, and a minimum probability threshold $p \in [0,1]$ with which the $\omega$-regular specification $\varphi$ should be satisfied. The verification algorithm also takes as input the polynomial controller $\pi$.

In addition, the algorithms take as input two parameters: (1)~the polynomial degree $d$ of polynomial templates used in the synthesis of the \lsm certificate and the controller, and (2)~the number of polynomial inequalities $n_I$ used to define the supporting invariant. These two parameters are formally defined in Step~1 below.

The verification algorithm may accept polynomial supporting invariants as inputs, and if an invariant $I$ is provided, then the algorithm only computes $\CertS$ and $\CertB$ for the fixed $I$.
For the control algorithm, the invariant depends on the controller, and is therefore always synthesized as a part of the certificate.
For the uniformity of presentation, we will assume that the verification algorithm is not provided an input invariant.

\smallskip\noindent{\bf Output.} Both the verification and the control algorithms return as output the \lsm certificate $\Cert = (\CertS, \CertB, I)$, as defined in Definition~\ref{def:certificate}. 
The control algorithm also returns the SDS controller $\pi$ computed by the algorithm. If the algorithms fail due to, e.g.,~the SMT solver timing out or returning ``UNSAT,'' they return ``Unknown.''

\smallskip\noindent{\bf Algorithm outline.} Our algorithms employ the classical {\em template-based synthesis} approach to synthesize an \lsm certificate $\Cert = (\CertS, \CertB,I)$ (and also the controller $\pi$ for the control problem). This automation approach is similar to those employed in prior works on algorithmic synthesis of supermartingales for analyzing stochastic systems and programs with respect to reachability or safety properties~\cite{CFG16,ChatterjeeNZ17,takisaka2021ranking,chatterjee2022sound}.

In Step~1, both the verification and the control algorithms fix symbolic polynomial templates with \emph{unknown coefficients} for all objects that they need to compute. Namely, the verification algorithm fixes templates for the \lsm certificate $\Cert = (\CertS, \CertB, I)$, while the control algorithm also fixes a template for the controller $\pi$. The templates for $\CertS,\CertB$, and $\pi$ are polynomial \emph{expressions}, whereas the template for $I$ is a conjunction of $n_I$ polynomial inequalities of the form ``$P_I^j(x,q)\geq 0$,'' with each $P_I^j$ being a polynomial expression. 
In Step~2, the algorithms encode all certificate conditions in Definition~\ref{def:certificate} as a system of quantified polynomial entailments over the unknown template coefficients. In Step~3, the resulting system of quantified polynomial entailments is solved by employing existing algorithms and efficient tooling support provided by the \textsc{PolyQEnt} tool~\cite{chatterjee2024polyhorn}, which reduces the problem to solving a purely existentially quantified system of polynomial constraints for which an off-the-shelf SMT solver is used. In the rest of this section, we provide a more detailed description of these steps.

\smallskip\noindent{\bf Step~1: Setting up templates.} The algorithms fix symbolic polynomial templates for all objects that they need to compute:
\begin{compactitem}
	\item {\em \lsm certificate.}
	Suppose $x_1,\ldots,x_n$ represent the state variables for the $n$ dimensions.
	For each LDBA state $q\in \Q$, we set up the polynomial templates for $\CertS(\cdot,q)$ and $\CertB(\cdot,q)$, as well as for polynomial expressions $P_I^1(\cdot,q), \dots, P_I^{n_I}(\cdot,q)$ over the state variables, each of polynomial degree $d$ and with unknown coefficients whose values are to be found out in the subsequent steps of the algorithm. Here, the polynomial degree $d$ and the number $n_I$ of polynomial inequalities that define the invariant $I$ are algorithm parameters. 
	For example, if $n=2$, $d=2$, and $n_I=1$, then $\CertS(x_1,x_2,q) = \theta^{q,\safe}_{11}x_1^2 + \theta^{q,\safe}_{22}x_2^2 + \theta^{q,\safe}_{12}x_1x_2 + \theta^{q,\safe}_1x_1 + \theta^{q,\safe}_2x_2 + \theta^{q,\safe}_0$, $\CertB(x_1,x_2,q) = \theta^{q,\live}_{11}x_1^2 + \theta^{q,\live}_{22}x_2^2 + \theta^{q,\live}_{12}x_1x_2 + \theta^{q,\live}_1x_1 + \theta^{q,\live}_2x_2 + \theta^{q,\live}_0$, and $(x_1,x_2,q)\in I$ is expressed as $P_I^1(x,q)\geq 0$ where $P_I^1(x,q) = \theta^{q,I}_{11}x_1^2 + \theta^{q,I}_{22}x_2^2 + \theta^{q,I}_{12}x_1x_2 + \theta^{q,I}_1x_1 + \theta^{q,I}_2x_2 + \theta^{q,I}_0 \geq 0$.	
	\item {\em Certificate parameters.} 
	The algorithm introduces variables $\var{\eta^S}$, $\var{\epsilon^S}$, $\var{M^S}$, $\var{\beta^S}$, $\var{\epsilon^L}$, $\var{M^L}$,  whose values will represent the constant parameters $\eta^S \leq 0$, $\epsilon^S,M^S,\epsilon^L,M^L > 0$, and $\beta^S\in \mathbb{R}$ from Definition~\ref{def:certificate}.	
	\item {\em Controller (for the control algorithm).}
	For every LDBA state $q\in \Q$, we set up a polynomial controller $\pi(\cdot,q)$ over the state variables with degree $d$ and with unknown coefficients.
	For example, if $n=2$ and $d=2$, then $\pi(x_1,x_2,q) = \lambda^q_{11}x_1^2 + \lambda^q_{22}x_2^2 + \lambda^q_{12}x_1x_2 + \lambda^q_{1}x_1 + \lambda^q_2x_2 + \lambda^q_0 $, where $\lambda_{\cdot}$-s are the unknown coefficients.
\end{compactitem} 

We will use $\Omega$ to denote the set of all unknown coefficients (i.e., $\theta^{q,\safe}_{11},\theta^{q,\live}_{11},\theta^{q,I}_{11},\ldots$) in the polynomial templates for $\CertS$, $\CertB$, $I$, and $\pi$, and the variables (i.e., $\var{\epsilon^S}$, $\var{\beta^S}$, $\ldots$).
For a given set of unknowns such as $\Omega$, we will write $\llbracket \Omega\rrbracket$ to denote the set of all possible valuations of the unknowns in $\Omega$.

\smallskip\noindent{\bf Step~2: Constraint collection.}
The goal of this step is to symbolically evaluate the conditions in Definition~\ref{def:certificate}, and rearrange them in a way that we finally obtain a system of polynomial inequality predicates in the prenex normal form 
\begin{align}\label{eq:b}
	\exists\omega\in\llbracket \Omega\rrbracket\;.\;\bigwedge_i \underbrace{\left( \forall q\in\Q\;.\;\forall x\in \mathbb{R}^n\;.\;\forall w\in \mathbb{R}^p\;.\; \Phi_i^\omega(q,x,w) \implies \Psi_i^\omega(q,x,w)\right)}_{R_i},
\end{align}
where $\Phi_i^\omega$ and $\Psi_i^\omega$ are boolean combinations of polynomial inequalities over the variables $q$, $x$, and $w$ and unknown coefficients whose values conform with $\omega$. 
First, observe that the constraints~(a)-(d) of Definition~\ref{def:certificate} generate constraints that are trivially of the form $R_i$. 
We explain the derivation for constraint~\eqref{constraint:cert:d}, the procedure for~\eqref{constraint:cert:e} is analogous:
\begin{enumerate}[(A)]
	\item \textbf{Prenex normal form:} First, we need to take out the two quantifications ``$\forall w\in \DS$'' from within ``$\safetycond$.''
	We show the process for the clause ``$\forall w \in \DS.\, \beta^S \leq \CertS(x,q) - \CertS(f(x,\pi(x),w),q') \leq \beta^S + M^S$,'' call it $\gamma_1$, and for ``$\forall w\in \DS.\, (f(x,\pi(x),w),q')\in I$,'' call it $\gamma_2$, the procedure is identical.
	There are two possibilities:
	\begin{enumerate}[(i)]
		\item Suppose the safety component of the \lsm certificate $\CertS$ and $q'$'s invariant polynomials $\set{P_{I}^i(\cdot,q')}_{i\in [1;n_I]}$  are linear, i.e., $d=1$, the disturbance in the system is additive, i.e., $\forall x,u,w\;.\;f(x,u,w) = g(x,u)+w$ for some function $g$, and the domain $\DS$ of the disturbance is bounded within a given range $[w_{\min},w_{\max}]$. 
		Then, for $\gamma_1$, we can altogether suppress ``$\forall w\in \DS$'' and replace $\safetycond$ with
		\begin{align*}
			\left(
					\begin{array}{c}
					 x\ \models a \land \forall w\in \DS.\, (f(x,\pi(x),w),q')\in I \\
					 \land \CertS(x,q) \geq \mathbb{E}^{\pi}_{w}[\CertS(f(x,\pi(x),w),q')] + \epsilon^S \\
					 \land \left( \beta^S \leq \CertS(x,q) - \CertS(g(x,\pi(x)),q') - \CertS(w_{\min},q') \leq \beta^S + M^S\right) \\
					 \land \left( \beta^S \leq \CertS(x,q) - \CertS(g(x,\pi(x)),q') - \CertS(w_{\max},q') \leq \beta^S + M^S\right) 
					 \end{array}
					 \right).
		\end{align*}
		This transformation is sound since for every $w\in\DS$, by the additivity of the noise and by the linearity of $\CertS$, we have $\CertS(f(x,u,w)) = \CertS(g(x,u)+w) = \CertS(g(x,u)) + \CertS(w)$, which is bounded between $\CertS(g(x,u)) + \CertS(w_{\min})$ and $\CertS(g(x,u)) + \CertS(w_{\max})$ from the boundedness of the noise and the linearity of $\CertS$.
		We can use the same idea for the clause $\gamma_2$.
		\item In general, the above restrictions could be insufficient, in which case we resort to the following stricter requirement at the expense of added incompleteness of the solution.
		The idea is that, instead of allowing $w$ to depend on the choice of $a\in \Sigma$ and $q'\in \tran(q,a)$, for $\gamma_1$, we require the following the hold true:
		\begin{align*}
		\forall a\in\Sigma\;.\; &\forall q'\in \tran(q,a)\;.\;\forall w\in \DS\;.\; \\
		&\left( \beta^S \leq \CertS(x,q) - \CertS(f(x,\pi(x),w),q') \leq \beta^S + M^S\right).
		\end{align*}
		Let the expression above be written as $D(x,q)$ in short, and it is easy to see that $D(x,q)$ can be expressed in the desired form \eqref{eq:b}.
		We can now write the constraint~\eqref{constraint:cert:d} of Definition~\ref{def:certificate} as:
		\begin{align*}
					&D(x,q)\,\land\\
					&\forall x \in \mathbb{R}^n.\, \CertS(x,q) \leq 0 \Longrightarrow \bigvee_{a \in \Sigma}\bigvee_{q' \in \tran(q,a)}\\
					&\quad\left(\begin{array}{c}
					x\ \models a \land \forall w\in \DS.\, (f(x,\pi(x),w),q')\in I \\
					 \land\CertS(x,q) \geq \mathbb{E}^{\pi}_{w}[\CertS(f(x,\pi(x),w),q')] + \epsilon^S \\
					\land\,\CertB(x,q) \geq \mathbb{E}^{\pi}_{w}[\CertB(f(x,\pi(x),w),q')] + \epsilon^L
					\end{array}\right)
		\end{align*}
		The same idea will work for $\gamma_2$.
	\end{enumerate}
	\item \textbf{Simplifications.} After moving the quantification ``$\forall w\in \DS$'' from the inside of ``$\safetycond$'' to the outermost part of 2(b) and 2(c), we substitute all the variables and functions with the templates that we set up in Step 1.
	For the expected value $\mathbb{E}^{\pi}_{w}[\CertS(f(x,\pi(x),w),q')]$, we use the linearity of the expectation operator to break it down into a polynomial over $x$ whose coefficients are functions of the moments of the distribution over $w$ (assumed to be given as input).
	For instance, since $\CertS(x,q)$ is a polynomial over $x$, and $f(x,\pi(x),w) = C_2(x)w^2 + C_1(x)w + C_0(x)$ where $C_2, C_1, C_0$ are polynomials over $x$, it follows that $\CertS(f(x,\pi(x),w),q) = \hat{C}_2(x)w^2 + \hat{C}_1(x)w + \hat{C}_0(x)$ where $\hat{C}_2, \hat{C}_1, \hat{C}_0$ are polynomials over $x$.
	This gives us: $\mathbb{E}^{\pi}_{w}[\CertS(f(x,\pi(x),w),q)] = \hat{C}_2(x)\cdot\mathbb{E}^{\pi}_{w}[w^2] + \hat{C}_1(x)\cdot\mathbb{E}^{\pi}_{w}[w] + \hat{C}_0(x)$, where $\mathbb{E}^{\pi}_{w}[w^i]$ is the $i$-th moment of the distribution (therefore, a constant) of the noise.
\end{enumerate}

\smallskip\noindent{\bf Step~3: Constraint solving.} Finally, the system of quantified polynomial entailments collected in Step~2 is solved. This is achieved by first employing Putinar's theorem~\cite{putinar1993positive} (or Farkas' lemma~\cite{farkas1902theorie}, if all involved polynomials have degree~$1$) to translate each quantified polynomial entailment into a system of purely existentially quantified polynomial constraints over the unknown template variables, as well as fresh variables introduced by the Putinar's theorem translation. Since this step is standard in the template-based synthesis literature, including that on the synthesis of supermartingale ceritificates \cite{CFG16,ChatterjeeNZ17,takisaka2021ranking,chatterjee2022sound}, we omit the details. The resulting system of purely existentially quantified polynomial constraints over real-valued variables is then solved by using an SMT solver. In our implementation, the application of Putinar's theorem and reduction to SMT solving are achieved via the \textsc{PolyQEnt} tool~\cite{chatterjee2024polyhorn}. See~\cite{chatterjee2024polyhorn} for details, including on how boolean combinations of polynomial inequalities are handled prior to applying Putinar's theorem. 
It is straightforward to note that if the encoding \eqref{eq:b} is satisfiable, then we obtain an \lsm certificate, and in case of control problem, we also get a controller. Then from the soundness of the \textsc{PolyQEnt} tool~\cite{chatterjee2024polyhorn} that we use to solve \eqref{eq:b}, Theorem~\ref{thm:alg:soundness} follows.

\begin{theorem}[Algorithm soundness]\label{thm:alg:soundness}
	Suppose that the verification algorithm returns an \lsm certificate $\Cert = (\CertS, \CertB,I)$. Then, $\Cert$ is an \lsm certificate with respect to the invariant $I$ as in Definition~\ref{def:certificate}, and the SDS $\Sys$ under controller $\pi$ satisfies specification $\varphi$ with probability at least $p$.
	
	Suppose that the control algorithm returns an \lsm certificate $\Cert = (\CertS, \CertB,I)$ and an SDS controller $\pi$. Then, $\Cert$ is an \lsm certificate with respect to the invariant $I$ as in Definition~\ref{def:certificate}, and the SDS $\Sys$ under controller $\pi$ satisfies specification $\varphi$ with probability at least $p$.
\end{theorem}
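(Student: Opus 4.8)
The plan is to obtain the theorem as a direct composition of two facts: (i)~whenever the algorithm returns a triple $\Cert=(\CertS,\CertB,I)$ (and, for the control problem, a controller $\pi$), this output---together with the returned valuation of the certificate parameters---is a genuine \lsm certificate in the sense of Definition~\ref{def:certificate} that moreover satisfies the threshold condition $p\le 1-\exp\!\big(\tfrac{8\,\eta^S\,\epsilon^S}{(M^S)^2}\big)$; and (ii)~Theorem~\ref{thm:soundness}, which then yields a controller $\pi^\times=(\pi^\Sys,\pi^\Aut)$ with $\pi^\Sys=\pi$ under which $\mathbb{P}^{\Sys,\pi}_{x_0}[\varphi]\ge p$ for every $x_0\in\Init$---precisely the asserted guarantee. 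For the control algorithm the only additional remark is that the controller template is instantiated from the same solver model, so the returned $\pi$ is exactly the SDS controller appearing in this guarantee. Hence the real content is to establish (i).

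To prove (i), I would first argue that the constraint system~\eqref{eq:b} collected in Step~2 \emph{implies} the conjunction of conditions (a)--(f) of Definition~\ref{def:certificate} (after the Step~1 template substitution), together with a polynomial rendering of the threshold condition. Conditions (a)--(d) are already universally quantified entailments of the form $R_i$, so they transfer verbatim. For (e) and (f) I would check that the prenex-normalization of the inner ``$\forall w\in\DS$'' quantifiers buried inside $\safetycond$ is sound: in the linear / additive-noise / bounded-disturbance case it is an exact reformulation, using $\CertS(f(x,u,w))=\CertS(g(x,u))+\CertS(w)$ together with the fact that the affine map $w\mapsto\CertS(w)$ is bounded on $[w_{\min},w_{\max}]$; in the general case the disjunction over LDBA transitions is replaced by the logically \emph{stronger} requirement $D(x,q)$, which demands the bounded-difference and invariant-containment clauses for \emph{all} $a$ and $q'$, and hence only strengthens the certificate. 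Likewise, the elimination of $\mathbb{E}^{\pi}_{w}[\CertS(f(x,\pi(x),w),q')]$ into a polynomial in $x$ whose coefficients are the (input) moments of the noise distribution is, by linearity of expectation, an exact substitution. Finally, although $p\le 1-\exp(8\eta^S\epsilon^S/(M^S)^2)$ is not itself polynomial, for a fixed input $p\in[0,1)$ and $M^S>0$ it is equivalent to the polynomial inequality $8\,\eta^S\,\epsilon^S\le \ln(1-p)\cdot(M^S)^2$, which is included in the constraint system; so it too is captured.

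Next I would invoke the soundness of the constraint solving in Step~3: each conjunct $R_i$ of~\eqref{eq:b} is translated into a purely existentially quantified polynomial system over the unknowns in $\Omega$ (together with fresh multiplier variables) by Farkas' lemma~\cite{farkas1902theorie} in the degree-$1$ case and by Putinar's Positivstellensatz~\cite{putinar1993positive} otherwise, after first splitting the Boolean structure of $\Phi_i^\omega$ and $\Psi_i^\omega$ into simple entailments---this is exactly the sound transformation implemented by \textsc{PolyQEnt}~\cite{chatterjee2024polyhorn}. By its soundness, any satisfying assignment returned by the SMT solver yields a valuation $\omega\in\llbracket\Omega\rrbracket$ for which every $R_i$ holds; substituting $\omega$ into the Step~1 templates produces concrete polynomial functions $\CertS,\CertB$, a concrete invariant $I=\{(x,q)\mid\bigwedge_j P_I^j(x,q)\ge 0\}$ (which over-approximates the reachable set of $\Sysx$, as enforced by the invariance clauses in~\eqref{eq:b}), concrete constants, and, for the control problem, a concrete controller $\pi$. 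By step (i) these objects satisfy Definition~\ref{def:certificate} and the threshold condition, so Theorem~\ref{thm:soundness} completes the argument for both algorithms.

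I expect step (i) to be the main obstacle---and within it, the bookkeeping needed to certify that Step~2's prenexing (above all the treatment of the nested ``$\forall w$'' sitting inside the disjunction over LDBA transitions, and the moment-based elimination of the expectation operator) produces constraints that are \emph{at least as strong as} the corresponding clauses of Definition~\ref{def:certificate}, so that the deliberately conservative over-approximations used for automation never weaken the certificate. Everything else is a routine composition of the soundness of \textsc{PolyQEnt} with Theorem~\ref{thm:soundness}.
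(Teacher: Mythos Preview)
Your proposal is correct and follows essentially the same approach as the paper: establish that the constraint system~\eqref{eq:b} soundly encodes Definition~\ref{def:certificate} (plus the threshold inequality), invoke the soundness of \textsc{PolyQEnt}, and then apply Theorem~\ref{thm:soundness}. The paper's own argument is in fact just a one-line remark preceding the theorem statement, so your proposal is a considerably more detailed rendering of the same reasoning.
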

Similar to existing template-based synthesis methods for polynomial program verification and controller synthesis~\cite{CFG16,AsadiC0GM21}, the runtime complexity of our vertification and control algorithms is in PSPACE when the polynomial degree $d$ is treated as a constant parameter. This is because the size of the final SMT query produced by our method is $\mathcal{O}(n_I* |Q|* n^d)$. As the final SMT query is a sentence in the existential theory of the reals, the parameterized PSPACE runtime complexity follows.
\section{Experiments}\label{sec:experiments}

We implemented our algorithm as a proof-of-concept tool in Python, where our tool automatically translates the given inputs to quantified polynomial entailments by using the procedure outlined in Section~\ref{sec:algo}, and then uses the tool \textsc{PolyQEnt}~\cite{chatterjee2024polyhorn} as its back end to compute solutions.
\textsc{PolyQEnt} uses Putinar's theorem~\cite{putinar1993positive} (or Farkas' lemma~\cite{farkas1902theorie} for linear systems and certificate and controller templates) and translates the provided entailments into existentially quantified polynomial constraints over unknown template variables, and uses off-the-shelf SMT solvers Z3~\cite{de2008z3} and MathSAT~\cite{bruttomesso2008mathsat} to find satisfying assignments of the variables. Our prototype tool is availabe at \url{https://github.com/Ipouyall/Omega-Regular-Stoch-Cert}.

We used our tool to automatically compute \lsm certificates for the verification of the random walk SDS $\Sysrw$ from Example~\ref{ex:random walk:1}, and for the control of a modified version of $\Sysrw$ defined as follows.
The system $\Sysrw_c$ has the state space $\mathbb{R}$, input space $\IS=[-2,2]$, disturbance space $\DS=[0,1]$ with a continuous uniform distribution over $\DS$, system dynamics given as 
\begin{align}
	f_c(x,u,w) = 
		\begin{cases}
			x	&	\text{if } x>100,\\
			x + u + w	&	\text{otherwise,}
		\end{cases}
\end{align}	
and the set of initial states $[2,3]$.
Both the verification and the control example were solved with a number of different specifications as listed in Table~\ref{tab:experiments}. For each specification, the general requirement is to push the state towards the origin (eventually for $\F\, a$, always eventually for $\G\F\, a$, etc.).
For the verification problems, the disturbances in $W=[-2,1]$ create an automatic bias towards the origin, which ultimately helps satisfying the specifications.
For the control problems, the disturbances in $W=[0,1]$ create a challenge by introducing bias \emph{away} from the origin, so the controller needs to counteract with negative control inputs at each step.

All the experiments were performed on an Apple Macbook Air with M2 chip and 16 GB RAM, and
we report the results in Table~\ref{tab:experiments}.
In both the verification and control problems, the SDS's have small probabilities of violating the objectives, by reaching the region $x>100$ and not being able to come out afterwards.
Therefore, existing approaches supporting only qualitative (almost sure satisfaction) $\omega$-regular specifications would not work, and no single supermartingale-based approach for quantitative specifications would support all the specifications at once.
In contrast, our tool achieved $99\%$ probability for satisfying the specifications, and computed \lsm certificates (and controllersfor the control problem) within minutes.

\begin{table}
\caption{Experimental results for verification and control of the random walk examples.
The predicates are: $a \coloneqq x\leq 0$ and $b\coloneqq x\leq 100$. 
Each experiment was run with polynomial degree $d=1$.}
\label{tab:experiments}
\centering
	\begin{tabular}{wc{4cm} wc{2cm} wc{2cm} wr{2cm}}
		\toprule
			&	Specification	&	Probability	&	Time (s)\\
		\midrule
		\multirow{6}{*}{Verification (SDS $\Sysrw$)} 	&	$\F\, a$	&	$0.9999$	&	$0.35$	\\
		&	$\G\F\, a$	&	$0.9999$	& $27.28$	\\
		&	$b\U\, a$	&	$0.9999$	&	$6.13$ \\
		&	$\G b\land \F a$	&	$0.9999$	&	$27.78$	\\
		&	$\G b$		&	$0.9999$	&	$28.60$	\\
		&	$\F a\land \F b$	&	$0.9999$	&	$172.28$	\\
		\midrule
		\multirow{5}{*}{Control (SDS $\Sysrw_c$)} 	&	$\F\, a$	&	$0.9999$	&	$0.39$	\\
		&	$\G\F\, a$	&	$0.9999$	&	$4.90$	\\
		&	$b\U\, a$	&	$0.9999$	&	$18.90$	\\
		&	$\G b\land \F a$	&	$0.9999$	&	$46.68$	\\
		&	$\G b$		&	$0.9999$	&	$4.04$	\\
		\bottomrule
	\end{tabular}
\end{table}
\section{Related Work}\label{sec:relatedwork}

Earliest known types of certificates in control theory date back to mid-twentieth century, and were developed for certifying stability~\cite{lasalle1961stability,kalman1960control} and invariance~\cite{nagumo1942lage} of dynamical systems.
However, back then, obtaining suitable certificates for general nonlinear systems was largely a manual process.
It was in early 2000s, when the seminal work of Parrilo~\cite{parrilo2000structured} first proposed the use of Positivstellens\"atze to automate the process of certificate computation, which were later strengthened and refined~\cite{papachristodoulou2005tutorial,sloth2016computation}.

In parallel to control theory, from early 2000s, program analysis also experienced an emergence of certificate-based approaches for proving termination of deterministic programs~\cite{coloon2001synthesis,colon2002practical}.
In the landmark paper by Chakarov~and~Sankarnarayanan~\cite{chakarov2013probabilistic}, ranking functions for deterministic programs were generalized to ranking supermartinagles for probabilistic programs. This has sparked interest in using supermartingale certificates for the analysis of probabilistic programs, with signifiant advances in supermartingale certificates and automated algorithms for termination~\cite{CFNH16:prob-termination,CFG16,mciver2017new,abate2021learning,chatterjee2023lexicographic,majumdar2025sound}, safety~\cite{CS14:expectation-invariants,ChatterjeeNZ17,takisaka2021ranking,WangS0CG21,chatterjee2022sound}, B\"uchi and co-B\"uchi (stability)~\cite{ChakarovVS16}, and cost analysis~\cite{ngo2018bounded,WFGCQS19,chatterjee2024quantitative}. These results were also exported to control theory to solve the verification and control problem for stochastic dynamical systems with logical specifications~\cite{prajna2004stochastic,xue2023reach,vzikelic2023learning}.

A majority of existing supermartingale certificates for stochastic systems only support fragments of $\omega$-regular specifications. Only recently, Abate et al.~\cite{abate2024stochastic} proposed a new class of supermartingale certificates for $\omega$-regular specifications represented using Streett automata. However, their certificate is restricted to the probability $1$ satisfaction of the specification, contrary to our quantitative certificate that allows arbitrary probability thresholds to be imposed as a requirement for satisfying the specification.
It is important to note that our certificate is incomplete, like most other existing supermartingale certificates that are available in the literature, see Remark~\ref{rem:incompleteness}.

It is important to mention that there are also other non-certificate-based approaches to verification and control of stochastic systems.
For example, there is a long line of research that falls under the abstraction-based control paradigm, where a given system model is abstracted to a simpler graph, so that graph theoretic techniques can be applied to approximately solve the original verification and control problem~\cite{esmaeil2013adaptive,zamani2014symbolic,mallik2017compositional,dutreix2022abstraction,majumdar2024symbolic}.
Among these works, only one is known to be able to support quantitative $\omega$-regular specifications~\cite{dutreix2022abstraction}, though their functionalities are complementary to ours: 
They support general nonlinear systems but only having additive control input and additive stochastic noise with unimodal, symmetric distribution.
We support polynomial systems and controllers and stochastic noise that can have arbitrary distributions.

\section{Discussions}

We present \lsm certificates, the first supermartingale-based certificates for quantiative $\omega$-regular verification and control of infinite-state stochastic systems.
\lsm certificates are defined on the product of the system and the equivalent LDBA representation of the given $\omega$-regular specification, and they generalize and combine existing safety and liveness supermartingale certificates in a hierarchical manner to achieve soundness.
We supplement the theoretical development with a standard template-based synthesis algorithm for solving the verification and control problems in polynomial stochastic systems.

There are several interesting future directions.
First, we will explore alternate methods for computing \lsm certificates, including learning-assisted approaches~\cite{chatterjee2023learner,vzikelic2023learning}, alongside optimizing our template-based approach that relies on off-the-shelf constraint solvers.
Even the state-of-the-art constraint solvers struggled to compute \lsm certificates, and eliminating this tooling bottleneck will be a priority in future works.
Second, in the same vein as the previous one, we will explore compositional approaches for \lsm certificates for better scalability; such approaches are well-studied for existing classes of certificates~\cite{nejati2022compositional,vzikelic2024compositional}.
Finally, we will explore completeness questions.
Right now, \lsm certificates provide sound but incomplete solutions: if they exist, then it follows that the specification holds, but not the other way round.

\begin{credits}
\subsubsection{\ackname} This work was supported in part by the Singapore Ministry of Education (MOE) Academic Research Fund (AcRF) Tier 1 grant (Project ID:22-SIS-SMU-100) and the ERC project ERC-2020-AdG 101020093.
\end{credits}

\bibliographystyle{splncs04}
\bibliography{ref}


\end{document}